\DeclarePairedDelimiter\ceil{\Big\lceil}{\Big\rceil}
\DeclarePairedDelimiter\abs{\lvert}{\rvert}
\newlength{\arrow}
\newcommand*{\goesto}[1]{\xrightarrow{\mathmakebox[\arrow]{#1}}}
\newcommand*{\D}[1]{\frac{d#1}{dt}}
\newcommand*{\ie}{i.e.\@\xspace}
\newcommand*{\eg}{e.g.\@\xspace}
\newcommand*{\br}{\vec{r}}
\newcommand*{\bp}{\vec{p}}
\newcommand*{\bx}{\vec{x}}
\newcommand*{\bz}{\vec{z}}
\newcommand*{\by}{\vec{y}}
\newcommand*{\bxn}{\vec{x}_0}
\newcommand*{\N}{\mathbb{N}}
\newcommand*{\R}{\mathbb{R}}
\newcommand*{\Q}{\mathbb{Q}}
\newcommand*{\Z}{\mathbb{Z}}
\newcommand*{\C}{\mathbb{C}}
\newcommand*{\td}{[0,\infty)}
\renewcommand*{\vec}[1]{\mathbf{#1}}
\numberwithin{equation}{section}
\newtheorem{construction}{Construction}[section]
\newtheorem{theorem}{Theorem}[section]
\newtheorem{lemma}[theorem]{Lemma}
\newtheorem{corollary}[theorem]{Corollary}
\renewenvironment{proof}
{{\noindent \it Proof.}}
{\qed}
\begin{document}

    \title{
        Real-Time Computability of Real Numbers by Chemical Reaction Networks
        \thanks{
            This research was supported in part by National Science Foundation Grants 1247051 and 1545028.
            A preliminary version of a portion of this work was presented at the Sixteenth International Conference on Unconventional Computation and Natural Computation (UCNC 2017, Fayetteville, AR, June 5-9, 2017).
        }}

    \author[1]{Xiang Huang}
	\author[2]{Titus H. Klinge}
	\author[1]{James I. Lathrop}
	\author[1]{Xiaoyuan Li}
	\author[1]{Jack~H.~Lutz}

	\affil[1]{Department of Computer Science, Iowa State University, Ames, IA 50012, USA\\
	{\{huangx,jil,forbesii,lutz\}@iastate.edu}}

	\affil[2]{Department of Computer Science, Grinnell College, Grinnell, IA 50112, USA\\
	klingeti@grinnell.edu}

	\date{}

    \maketitle

    \begin{abstract}
        We explore the class of real numbers that are computed in real time by deterministic chemical reaction networks that are \emph{integral} in the sense that all their reaction rate constants are positive integers.
        We say that such a reaction network \emph{computes} a real number $\alpha$ in \emph{real time} if it has a designated species $X$ such that, when all species concentrations are set to zero at time $t = 0$, the concentration $x(t)$ of $X$ is within $2^{-t}$ of $\abs{\alpha}$ at all times $t \ge 1$, and the concentrations of all other species are bounded.
        We show that every algebraic number and some transcendental numbers are real time computable by chemical reaction networks in this sense.
        We discuss possible implications of this for the 1965 Hartmanis-Stearns conjecture, which says that no irrational algebraic number is real time computable by a Turing machine.
    \end{abstract}

	\section{Introduction}
Chemical reaction networks, originally conceived as descriptive mathematical models of molecular interactions in well-mixed solutions, are also widely used as prescriptive mathematical models for engineering molecular processes.
In the present century this prescriptive use of chemical reaction networks has been automated by software compilers that translate chemical reaction networks into complete specifications of DNA strand displacement systems that simulate them \cite{jSoSeWi10,jCDSPCS13}.
Chemical reaction networks have thus become the programming language of choice for many molecular programming applications.

There are several alternative semantics (operational meanings, also called kinetics) for chemical reaction networks.
The two oldest and most widely used of these are \emph{deterministic mass-action semantics} and \emph{stochastic mass-action semantics}.
This paper concerns the former of these, so for the rest of this paper, a \emph{chemical reaction network} (briefly, a \emph{CRN} or a \emph{deterministic CRN}) is a chemical reaction network with deterministic mass-action semantics.
This model is precisely specified in section~2 below.
For this introduction, it suffices to say that such a CRN is an ordered pair $N =(S, R)$, where $S$ is a finite set of \emph{species} (abstract molecule types), and $R$ is a finite set of \emph{reactions}, each of which has some form like: 
\[
X + Z \goesto{k} 2Y + Z,
\]
where $X, Y, Z \in S$ are species and $k \in (0, \infty)$ is a \emph{rate constant}.
A \emph{state} $\bx$ of $N$ specifies the real-valued \emph{concentration} $\bx(Y) \in \td$ of each species $Y \in S$.
Given an initial state $\bx(0)$ at time $t = 0$, deterministic mass action semantics specify the (continuous) evolution of the state $\bx(t)$ over time.

Even prior to the implementation of chemical reaction networks as a programming language it was clear that they constitute a model of computation.
In the case of deterministic CRNs, Stansifer had reportedly proven \cite{oCSWB09,oSolo16} that this model is Turing universal, \ie, that every algorithm can be simulated by a deterministic CRN, but no proof was published.
(Note: The title of \cite{jMagn97} seems to make this assertion, but the paper only exhibits a way to use deterministic CRNs to simulate finite Boolean circuits.) Fages, Le Guludec, Bournez, and Pouly \cite{cFLBP17} have now proven this universality theorem.

Deterministic chemical reaction networks are an analog model of computation, both in the intuitive sense that their states are vectors of real-valued concentrations that vary continuously over real-valued times and in the technical sense that they are a special case of Shannon's \emph{general purpose analog computer} (\emph{GPAC}) \cite{jShan41}, as explained in section 5 below.

This paper studies the ability of deterministic CRNs to rapidly compute real numbers in the following analog sense.
We say that a deterministic CRN \emph{computes} a real number $\alpha$ \emph{in real time} if it has a designated species $X$ such that the following three things hold.
(See section~3 for more details.)
First, the CRN's reaction rate constants are positive integers, and it is initialized with all concentrations set to zero at time $t=0$.
This implies that the CRN is, like any reasonable model of computation, finitely specifiable.
It also implies that only countably many real numbers are real time CRN-computable.
Second, there is some fixed bound on all the CRN's concentrations.
Under deterministic mass-action semantics, this implies that all the reaction rates of the CRN are bounded, whence time is a meaningful resource.
Third, the concentration $x(t)$ of the designated species $X(t)$ is within $2^{-t}$ of $\abs{\alpha}$ -- \ie, within $t$ bits of accuracy of $\abs{\alpha}$ -- at all times $t \ge 1$.
We say that the real number $\alpha$ is \emph{real time computable by chemical reaction networks} (briefly, \emph{real time CRN-computable}) if there is a CRN that computes $\alpha$ in this sense.
Elementary properties of real-time CRN computability are developed in section~3.

Our main theorem says that every algebraic number (\ie, every real solution of a polynomial with integer coefficients) is real time CRN-computable.
This result is proven in sections~4 and 5.
We prove in Section~6 that some transcendental (\ie, non-algebraic) real numbers are also real time CRN-computable.

Section~7 contains two discussions.
First, we compare real-time CRN computability with computability in the closely related \emph{large population protocol (LPP)} model of Bournez, Fraigniaud, and Koegler \cite{cBoFrKo12,oKoeg12}.
Second, our main theorem is a counterpoint -- but not a disproof -- of the 57-year-old, open Hartmanis-Stearns conjecture that no algebraic irrational is real time computable by a Turing machine \cite{jHarSte65}.
We discuss this contrast in some detail and pose two questions whose answers would shed further light on the computational complexities of algebraic irrationals.

    \section{Chemical Reaction Networks}
A \emph{species} is an abstract type of molecule.
Capital Roman characters such as $X$, $Y$, and $Z$ are commonly used to distinguish different species, but we also use decorations such as $X_0$, $\widehat{Y}$, and $\overline{Z}$ to distinguish them.

A \emph{reaction} over a finite set $S$ of species is a tuple $\rho=(\br,\bp,k)\in\N^S\times\N^S\times(0,\infty)$ and its components are called the \emph{reactant vector}, the \emph{product vector}, and the \emph{rate constant}, respectively.
(Here $\N^S$ denotes the set of all functions mapping $S$ into $\N$.)
To avoid excessive use of subscripts, for a reaction $\rho$ we use $\br(\rho)$, $\bp(\rho)$, and $k(\rho)$ to access the individual components of $\rho$.
A species $Y\in S$ is called a \emph{reactant} if $\br(Y)>0$, called a \emph{product} if $\bp(Y)>0$, and called a \emph{catalyst} if $\br(Y)=\bp(Y)>0$.
The \emph{net effect} of reaction $\rho=(\br,\bp,k)$ is the vector $\Delta\rho\in\N^S$ defined by
\[
	\Delta\rho(Y) = \bp(Y) - \br(Y)
\]
for each $Y\in S$.

A \emph{chemical reaction network} (\emph{CRN}) is an ordered pair $N=(S,R)$ where $S$ is a finite set of species and $R$ is a finite set of reactions over $S$.
Although this completes the definition of the \emph{syntax} of a CRN, we have yet to define the \emph{semantics} used in this paper.

Under \emph{deterministic mass action semantics}, the \emph{state} of a CRN $N=(S,R)$ at time $t$ is a real-valued vector $\bx(t)\in\td^S$, and for $Y\in S$, we call $\bx(t)(Y)$ the \emph{concentration} of $Y$ in $\bx(t)$.
We also write $y(t)=\bx(t)(Y)$ to denote the concentration of species $Y$ at time $t$.

The \emph{rate} of a reaction $\rho$ at time $t$ is defined as
\begin{equation}\label{eq:reaction_rate}
	\text{rate}_{\rho}(t) = k(\rho)\cdot\prod_{Y\in S}y(t)^{\br(\rho)(Y)}.
\end{equation}
This conforms to the so-called law of mass action which states that the rate of a reaction is proportional to the concentration of its reactants.

The \emph{total rate of change} of a species $Y\in S$ depends on the rates of all reactions in the CRN and the magnitude of their net effect on $Y$.
Therefore the concentration $y(t)$ conforms to the ordinary differential equation (ODE)
\begin{equation}\label{eq:total_rate_of_change}
	\D{y} = \sum_{\rho\in R}\Delta\rho(Y)\cdot\text{rate}_\rho(t).
\end{equation}

If we let $\mathcal{E}_Y$ be the ODE above for each $Y\in S$, then the \emph{mass action system} of the CRN is the coupled system
\begin{equation}\label{eq:mass-action-system}
	(\mathcal{E}_Y\mid Y\in S).
\end{equation}
Given an initial state $\bxn\in\td^S$, the behavior of the CRN is defined as the solution to the initial value problem (IVP) of the mass action system \eqref{eq:mass-action-system} along with the initial condition
\[
	y(0) = \bxn(Y)
\]
for each $Y\in S$.
    \section{Real-Time CRN Computability}
We say that a real number $\alpha$ is \emph{real time computable by chemical reaction networks} (briefly, \emph{real time CRN-computable}), and we write $\alpha \in \R_{RTCRN}$, if there exist a chemical reaction network $N = (S, R)$ and a species $X \in S$ with the following three properties:
\begin{enumerate}[1]
    \item (integrality).
    The CRN $N$ is \emph{integral} in the sense that:
    \begin{equation}\label{def:rt_integrality}
        k(\rho) \in \Z^+ 
    \end{equation}
    for all $\rho \in R$.

    \item (boundedness).
    There is a constant $\beta >0$ such that, if $N$ is initialized with $y(0) = 0$ for all $Y \in S$, then, for all $Y \in S$ and $t \in [0, \infty)$,
    \begin{equation}\label{def:rt_boundedness}
        y(t) \le \beta.
    \end{equation}

    \item (real-time convergence).
    If $N$ is initialized with $y(0)=0$ for all $Y \in S$, then for all $t \in [1, \infty)$,
    \begin{equation}\label{def:rt_convergence}
        \abs{x(t) - \abs{\alpha}} \le 2^{-t}.
    \end{equation}
\end{enumerate}

The integrality condition \eqref{def:rt_integrality} prevents the CRN $N$ from ``cheating'' by having information about $\alpha$ explicitly encoded into its rate constants.
To see that this is necessary to avoid nontriviality, note that, for any $\alpha\in(0, \infty)$, if the simple CRN:
\begin{align*}
    \emptyset &\goesto{\alpha} X,\\
    X         &\goesto{1}      \emptyset
\end{align*}
is initialized with $x(0)=0$, then
\[
    x(t) = \alpha(1 - e^{-t})
\]
for all $t\in\td$.

The boundedness condition \eqref{def:rt_boundedness} imposes a ``speed limit'' on the CRN $N$.
This prevents $N$ from acting as a ``Zeno machine'' (machine that does infinite work in finite time) in the sense of Weyl \cite{oWeyl27}.
More precisely, condition \eqref{def:rt_boundedness} ensures that the reaction rates 
\eqref{eq:reaction_rate} of $N$ are all bounded.
This implies that the arc length of the curve traced by the state $\bx(s)$ of $N$ for $0 \le s \le t$ is $\theta(t)$, \ie, bounded above and below by positive constant multiples of $t$.
Pouly \cite{oPoul15,cBoGrPo16} has convincingly argued (in a more general setting) that this arc length, which we call the \emph{reaction clock time}, is the correct measure of the time that a CRN spends computing during the interval $[0,t]$.
Viewed in this light, condition \eqref{def:rt_boundedness} ensures that $t$ is, up to constant multiples, an accurate measure of the reaction clock time of $N$ during the interval $[0, t]$.

The real-time convergence condition \eqref{def:rt_convergence} requires the CRN $N$ to compute $\abs{\alpha}$ to within $t$ bits of accuracy by each time $t \ge 1$.
Note that this is an \emph{analog} approximation of $\abs{\alpha}$.
The CRN $N$ is not required to explicitly produce symbols in any sort of digital representation of $\abs{\alpha}$.

For the rest of this paper, unless otherwise noted, all CRNs $N=(S,R)$ are assumed to be initialized with $y(0)=0$ for all $Y\in S$.

To save space in our first lemma, we define the predicate
\begin{align*}
    \Phi_{\tau,\gamma}(\alpha) \equiv\;
        &\text{there exist a CRN }N=(S,R)\text{ and a species }X\in S\\
        &\text{satisfying \eqref{def:rt_integrality} and \eqref{def:rt_boundedness} such that, for all }t\in[\tau,\infty),\\
        &\abs{x(t)-\abs{\alpha}}\le e^{-\gamma t}
\end{align*}
for each $\tau,\gamma\in(0,\infty)$ and $\alpha\in\R$.
Note that $\Phi_{1,\ln 2}(\alpha)$ is the assertion that $\alpha\in\R_{RTCRN}$.
The following convenient lemma says that the definition of $\R_{RTCRN}$ is robust with respect to linear changes in condition \eqref{def:rt_boundedness}.

\begin{lemma}\label{lemma:real-time-dilation}
    For each $\alpha\in\R$ the following conditions are equivalent.
    \begin{enumerate}
        \item $\alpha\in\R_{RTCRN}$.
        \item There exists $\tau,\gamma\in(0,\infty)$ such that $\Phi_{\tau,\gamma}(\alpha)$ holds.
        \item For every $\tau,\gamma\in(0,\infty)$, $\Phi_{\tau,\gamma}(\alpha)$ holds.
    \end{enumerate}
\end{lemma}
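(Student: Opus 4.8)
The plan is to establish the cycle of implications $(3)\Rightarrow(1)\Rightarrow(2)\Rightarrow(3)$. The first two implications are essentially immediate: $(3)\Rightarrow(1)$ follows by specializing to $\tau=1$, $\gamma=\ln 2$, since $\Phi_{1,\ln 2}(\alpha)$ is precisely the assertion $\alpha\in\R_{RTCRN}$ (the boundedness witness required by $\Phi$ is exactly condition~\eqref{def:rt_boundedness}); and $(1)\Rightarrow(2)$ is trivial, taking the same $\tau=1$, $\gamma=\ln 2$. So the entire content of the lemma is the implication $(2)\Rightarrow(3)$: given one pair $(\tau_0,\gamma_0)$ for which some integral, bounded CRN converges to $\abs{\alpha}$ at rate $e^{-\gamma_0 t}$, we must manufacture, for an \emph{arbitrary} target pair $(\tau,\gamma)$, a (possibly different) integral bounded CRN whose designated species is within $e^{-\gamma t}$ of $\abs{\alpha}$ for all $t\ge\tau$.

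The key idea is \emph{time dilation}: if the mass-action ODE system has solution $\bx(t)$, then multiplying every rate constant $k(\rho)$ by a common positive integer factor $c$ rescales time, replacing $\bx(t)$ by $\bx(ct)$. This preserves integrality (since $c\in\Z^+$ and each $k(\rho)\in\Z^+$), preserves boundedness with the same bound $\beta$, and turns the error bound $e^{-\gamma_0 t}$ at times $t\ge\tau_0$ into $e^{-\gamma_0 c t}$ at times $t\ge \tau_0/c$. Choosing $c$ to be any integer with $c\ge\gamma/\gamma_0$ and $c\ge\tau_0/\tau$ then gives, for all $t\ge\tau$, the bound $\abs{x(ct)-\abs{\alpha}}\le e^{-\gamma_0 c t}\le e^{-\gamma t}$, which is exactly $\Phi_{\tau,\gamma}(\alpha)$. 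I would verify the time-rescaling claim by a one-line chain rule computation: if $y$ solves $\D{y}=\sum_\rho \Delta\rho(Y)\,k(\rho)\prod_Z y^{\br(\rho)(Z)}$ and we set $\tilde y(t)=y(ct)$, then $\tilde y$ solves the same system with each $k(\rho)$ replaced by $c\,k(\rho)$, and $\tilde y(0)=y(0)=0$, so the zero-initialization convention is also respected.

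The only subtlety — and I expect it to be the main (though minor) obstacle — is that $\Phi$ requires the error bound to hold from the specified start time onward, and time dilation moves that start time from $\tau_0$ down to $\tau_0/c$; one must check the chosen $c$ makes $\tau_0/c\le\tau$, which is why the condition $c\ge\tau_0/\tau$ is included. Since both $\gamma/\gamma_0$ and $\tau_0/\tau$ are fixed positive reals, such an integer $c$ exists, so there is no real difficulty here, merely bookkeeping. A secondary point worth a sentence is that dilation cannot be used directly to prove $(2)\Rightarrow(1)$ while fixing the convergence \emph{constant} to be $2^{-t}=e^{-(\ln 2)t}$ unless one also notes $\ln 2>0$, which is why routing through the fully quantified $\Phi_{\tau,\gamma}$ and the equivalences is the cleanest packaging. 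Altogether the proof is short: two trivial specializations plus one chain-rule rescaling argument with an elementary choice of dilation factor.
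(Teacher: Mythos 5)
Your proposal is correct and follows essentially the same route as the paper's proof: reduce everything to $(2)\Rightarrow(3)$ and realize the time dilation by multiplying every rate constant by a single positive integer $c\ge\max\{\tau_0/\tau,\ \gamma/\gamma_0\}$, exactly as the paper does with its factor $a$. The only difference is cosmetic — you spell out the chain-rule verification of $y_{\widehat N}(t)=y_N(ct)$, which the paper merely asserts.
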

\begin{proof}
    Let $\alpha\in\R$.
    It is clear that $(3) \Rightarrow (1) \Rightarrow (2)$, so it suffices to prove that $(2) \Rightarrow (3)$.
    For this, let $N, X, \tau, $ and $\gamma$ testify that $(2)$ holds, \ie, let $N$ and $X$ testify that $\Phi_{\tau, \gamma}(\alpha)$ holds.
    To prove $(3)$, let $\widehat{\tau}, \widehat{\gamma} \in (0, \infty)$.
    It suffices to show that  $\Phi_{\widehat{\tau}, \widehat{\gamma}}(\alpha)$ holds.
    Let
    \[
        a = \max\left\{\ceil{\frac{\tau}{\widehat{\tau}}}, \ceil{\frac{\widehat{\gamma}}{\gamma}}\right\},
    \]
    and let $\widehat{N} =(S, \widehat{R})$, where
    \[
        \widehat{R} = \{ (\br, \bp, ak) \mid (\br, \bp, k) \in R \}.
    \]
    That is, $\widehat{N}$ is exactly like $N$, except that each rate constant of $N$ has been multiplied by the positive integer $a$.
    Then $\widehat{N}$ is an integral CRN that is a ``sped up version'' of $N$ in the sense that, for all $y\in S$ and $t\in\td$,
    \begin{equation}\label{eq:time_dilation_equivalent}
        y_{\widehat{N}}(t) = y_{N}(at),
    \end{equation}
    where $y_{N}$ and $y_{\widehat{N}}$ are the values of $y$ in $N$ and $\widehat{N}$, respectively.
    This immediately implies that $\widehat{N}$ satisfies  \eqref{def:rt_boundedness}.
    Now let $t \in [\widehat{\tau}, \infty)$.
    Then $at \in [\tau, \infty)$, so our assumption $\Phi_{\tau, \gamma}(\alpha)$ tells us that 
    \begin{align*}
        \lvert x_{\widehat{N}}(t) - \abs{\alpha} \rvert
            &= \lvert x_{N}(at) - \abs{\alpha} \rvert \\
            &\le e^{-\gamma at}  \\
            &\le e^{-\widehat{\gamma} t},
    \end{align*}
    
    affirming $\Phi_{\widehat{\tau}, \widehat{\gamma}}(\alpha)$.
\end{proof}

The following lemma is a warm-up for our examination of $\R_{RTCRN}$

\begin{lemma}\label{lemma:rationals-rtcrn}
    $\Q\subsetneqq\R_{RTCRN}$
\end{lemma}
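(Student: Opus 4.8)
The plan is to establish the two parts of the statement separately: the inclusion $\Q\subseteq\R_{RTCRN}$, and then its properness via a single irrational witness. Both reduce, after a short ODE computation, to an application of Lemma~\ref{lemma:real-time-dilation}, which lets us ignore the precise tolerance $2^{-t}$ and work with an arbitrary exponential rate.

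For the inclusion, fix $\alpha\in\Q$. If $\alpha=0$, the CRN $N=(\{X\},\emptyset)$ with no reactions has $x(t)\equiv 0$ and trivially satisfies all three conditions. If $\alpha\ne 0$, write $\abs{\alpha}=p/q$ with $p,q\in\Z^+$ and take $N$ to consist of the two reactions $\emptyset\goesto{p}X$ and $X\goesto{q}\emptyset$; by \eqref{eq:total_rate_of_change} its mass-action ODE is $\D{x}=p-qx$, whose solution from $x(0)=0$ is $x(t)=\frac{p}{q}\bigl(1-e^{-qt}\bigr)$. Hence $N$ is integral, is bounded by $\beta=p/q$, and satisfies $\abs{x(t)-\abs{\alpha}}=\frac{p}{q}e^{-qt}$. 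Since $q\ge 1$, we have $\frac{p}{q}e^{-qt}\le e^{-t/2}$ for every $t$ past some threshold $\tau$, so $\Phi_{\tau,1/2}(\alpha)$ holds and Lemma~\ref{lemma:real-time-dilation} gives $\alpha\in\R_{RTCRN}$.

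For properness I would show $\sqrt 2\in\R_{RTCRN}$. Take $N$ to consist of the reactions $\emptyset\goesto{2}X$ and $2X\goesto{1}X$, so that \eqref{eq:total_rate_of_change} yields $\D{x}=2-x^2$. From $x(0)=0$ the solution increases monotonically and, by uniqueness of solutions, remains strictly below the equilibrium $\sqrt2$ for all $t$; thus $N$ is integral and bounded, say by $\beta=2$. Substituting $u=\sqrt2-x$, which lies in $(0,\sqrt2]$ for $t>0$, gives $\D{u}=-u(\sqrt2+x)\le-\sqrt2\,u$, hence $\abs{x(t)-\sqrt2}=u(t)\le\sqrt2\,e^{-\sqrt2\,t}$ (alternatively one solves the Riccati equation explicitly as $x(t)=\sqrt2\tanh(\sqrt2\,t)$). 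Since $\sqrt2<2$, Lemma~\ref{lemma:real-time-dilation} again upgrades this to $\sqrt2\in\R_{RTCRN}$, and $\sqrt2\notin\Q$ completes the proof.

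I do not expect a genuine obstacle here; the single point that must not be glossed is that the decay rates produced by these minimal constructions ($q$, respectively $\sqrt2$) have nothing to do with the target tolerance $2^{-t}$, and for large $p$ the bound $\abs{x(t)-\abs{\alpha}}\le 2^{-t}$ can fail near $t=1$. This is exactly why the verification should be routed through the robustness Lemma~\ref{lemma:real-time-dilation} rather than checking \eqref{def:rt_convergence} on the nose; everything else — solving the two ODEs from the all-zero initial state and reading off $\beta$, $\tau$, and $\gamma$ — is routine.
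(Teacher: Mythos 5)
Your proposal is correct and takes essentially the same route as the paper: the linear CRN $\emptyset\goesto{p}X$, $X\goesto{q}\emptyset$ for rationals, a quadratic CRN converging to a square-root irrational for properness, and Lemma~\ref{lemma:real-time-dilation} to absorb the mismatch between the natural decay rates and the $2^{-t}$ tolerance. The only (cosmetic) difference is that you use $\sqrt{2}$ via $\D{x}=2-x^2$ where the paper uses $\tfrac{1}{\sqrt{2}}$ via $\D{x}=1-2x^2$, and you obtain the exponential bound by a differential inequality rather than from the explicit $\tanh$ solution.
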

\begin{proof}
    If $\alpha=0$, then the CRN $N=(\{X\},\emptyset)$ testifies that $\alpha\in\R_{RTCRN}$.
    If $\alpha\in\Q\setminus\{0\}$, then we can write $\abs{\alpha}=\frac{a}{b}$, where $a, b \in \Z^+$.
    Then the integral CRN
    \begin{align*}
        \emptyset &\goesto{a} X \\
        X &\goesto{b} \emptyset
    \end{align*}
    satisfies 
    \[
        x(t) = \frac{a}{b}(1-e^{-bt}),
    \]
    so $\alpha \in \R_{RTCRN}$ by Lemma~\ref{lemma:real-time-dilation}.
    This shows that $\Q \subseteq  \R_{RTCRN}$.

    To see that $\Q \neq \R_{RTCRN}$, it suffices to show that $\frac{1}{\sqrt{2}} \in \R_{RTCRN}$.
    Since the integral CRN
    \begin{align*}
        \emptyset &\goesto{1} X\\
        2X &\goesto{2} X
    \end{align*}
    satisfies
    \[
        x(t) = \frac{1}{\sqrt{2}}\left(\frac{1-e^{-2\sqrt{2}t}}{1+e^{-2\sqrt{2}t}}\right),
    \]
    we have that
    \begin{align*}
        \abs*{x(t)-\frac{1}{\sqrt{2}}}
            &= \frac{1}{\sqrt{2}}\left(\frac{e^{-2\sqrt{2}t}}{1+e^{-2\sqrt{2}t}}\right)\\
            &\le \frac{1}{\sqrt{2}}e^{-2\sqrt{2}t}<e^{-2\sqrt{2}t},
    \end{align*}
    so $\frac{1}{\sqrt{2}}\in\R_{RTCRN}$ by Lemma~\ref{lemma:real-time-dilation}.
\end{proof}

Computable real numbers were introduced by Turing \cite{jTuri36,jTuri37} and have been extensively investigated \cite{oKo91,oWeih00}.

A real number $\alpha$ is \emph{computable}, and we write $\alpha \in \R_{comp}$, if there is a computable function $\widehat{\alpha} : \N \rightarrow \Q$ such that, for all $r \in \N$
\[
    \abs{\widehat{\alpha}(r) - \alpha} \leq 2^{-r}.
\]

\begin{lemma}
    $\R_{RTCRN} \subsetneqq \R_{comp}$
\end{lemma}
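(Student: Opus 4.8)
The plan is to prove the two inclusions $\R_{RTCRN}\subseteq\R_{comp}$ and $\R_{RTCRN}\ne\R_{comp}$ by rather different arguments. For the first, fix $\alpha\in\R_{RTCRN}$ with a witnessing CRN $N=(S,R)$, species $X$, and boundedness constant $\beta>0$ from \eqref{def:rt_boundedness}; since a trajectory bounded by $\beta$ is also bounded by $\lceil\beta\rceil\in\Z^+$, we may take $\beta\in\Z^+$. The mass-action system \eqref{eq:mass-action-system} is an autonomous initial value problem $\D{\bx}=\vec f(\bx)$, $\bx(0)=\vec 0$, where $\vec f$ is a vector of polynomials with integer coefficients (products of rate constants and net effects), and by boundedness the solution stays in the compact box $[0,\beta]^S$ for all $t\ge 0$ (in particular it exists for all time). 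On that box one computes, from the finite data $(N,\beta)$, a Lipschitz constant $L$ for $\vec f$ and a bound $M$ on the second derivative of $\bx$; the classical global-error bound for Euler's method then says that integrating from $0$ to a rational time $T$ with a suitable rational step size $h$ (using exact rational arithmetic throughout, which is possible because $\vec f$ has integer coefficients) produces a rational vector within $\tfrac{Mh}{2L}(e^{LT}-1)$ of $\bx(T)$. Given $r\in\N$, take $T=r+1$ and choose $h$ computably so that this bound is at most $2^{-(r+1)}$; the resulting rational $q$ (its $X$-component) satisfies $\abs{q-x(T)}\le 2^{-(r+1)}$, while \eqref{def:rt_convergence} gives $\abs{x(T)-\abs{\alpha}}\le 2^{-T}=2^{-(r+1)}$, so $\abs{q-\abs{\alpha}}\le 2^{-r}$; hence $\abs{\alpha}\in\R_{comp}$, and therefore $\alpha\in\R_{comp}$ (multiply by the fixed sign of $\alpha$). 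Here $\beta$ need not be computable from $N$, but for a single real number it suffices that some valid $\beta$, and hence the above algorithm, exists.

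For strictness I would diagonalize. Enumerate all pairs $(N_i,X_i)$ consisting of an integral CRN and a designated species so that every such pair occurs infinitely often, and build a computable nested family of closed rational intervals $[0,1]=I_0\supseteq I_1\supseteq\cdots$ with $\abs{I_{i+1}}=\abs{I_i}/5$ as follows. At stage $i$, run Euler's method on $N_i$ with a prescribed rational step size $h_i$ and horizon $T_i$, where $T_i\to\infty$ and $h_i\to 0$ fast (say $T_i=3i$, $h_i=2^{-i^2}$); this is a finite computation that always halts and outputs a rational $q_i$ approximating the concentration of $X_i$ at time $T_i$. Choose $I_{i+1}$ to be one of the five equal closed subintervals of $I_i$ whose distance to $q_i$ is at least $\abs{I_i}/5$ (such a subinterval exists for every $q_i$). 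Then $\alpha:=\bigcap_i I_i$ is a computable real. If $\beta\in\R_{RTCRN}$, fix a witness $(N,X)$ for it; since $(N,X)=(N_i,X_i)$ for infinitely many $i$, pick such an $i$ so large that $2^{-T_i}$ and the stage-$i$ Euler error --- now governed by the fixed Lipschitz constant and box of $N$ --- are each smaller than $\abs{I_i}/5$. Then $q_i$ is within $\abs{I_i}/5$ of the designated concentration of $N$ at time $T_i$, hence within $\abs{I_i}/5$ of $\beta$ by \eqref{def:rt_convergence}, while $\alpha\in I_{i+1}$ has distance at least $\abs{I_{i+1}}=\abs{I_i}/5$ from $q_i$; so $\alpha\ne\beta$. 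Thus $\alpha\in\R_{comp}\setminus\R_{RTCRN}$, which together with the first inclusion gives $\R_{RTCRN}\subsetneqq\R_{comp}$. (Alternatively, a rounding-error analysis of the Euler simulation shows $\R_{RTCRN}$ lies in the class of reals computable in time $2^{O(n)}$, which is properly contained in $\R_{comp}$ by the time hierarchy theorem.)

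The easy half is the inclusion: it is routine numerical-ODE bookkeeping once the a priori bound $\beta$ is used to control the polynomial vector field. The main obstacle is the strictness half, and specifically reconciling two tensions: (i) the diagonalization must define a genuinely computable real, so each stage's Euler computation must halt outright --- which it does, being a fixed number of arithmetic steps, even if the true solution of the $i$-th CRN blows up before time $T_i$; and (ii) we must nonetheless defeat every bona fide element of $\R_{RTCRN}$, which cannot be achieved with a single fixed schedule $(h_i,T_i)$ because a CRN's Lipschitz constant can be astronomically large in terms of its index. Listing each pair infinitely often resolves this: a fixed genuine witness is eventually processed at a stage whose prescribed step size is tiny relative to that witness's (by then constant) Lipschitz constant, so the Euler output there really does pin down the witnessed real accurately enough to be avoided.
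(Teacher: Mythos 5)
Your argument is correct in outline but takes a genuinely different route from the paper's. The paper disposes of both halves at once by citing Theorem~16 of Gra\c{c}a and Pouly: a bounded solution of a polynomial ODE with integer coefficients is \emph{polynomial-time} computable, so $\R_{RTCRN}$ lies inside Ko's class of polynomial-time computable reals, and strictness is then the classical fact that not every computable real is polynomial-time computable. You instead prove the inclusion by an explicit Euler simulation with exact rational arithmetic (which yields only computability, not the polynomial-time bound, since your step size must shrink exponentially in the requested precision), and you prove strictness by a direct diagonalization against an enumeration of all integral CRNs, listing each pair infinitely often so that a fixed genuine witness is eventually processed with a step size fine enough for its own Lipschitz constant. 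Your route is self-contained and elementary where the paper leans on a cited theorem; the paper's route is shorter and gives the stronger quantitative conclusion. Your parenthetical alternative (an exponential-time upper bound plus the hierarchy theorem) is actually the closer analogue of what the paper does.

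Two points need tightening. First, at the end of the diagonalization the triangle inequality gives only $\abs{q_i-\beta}< 2\abs{I_i}/5$ (Euler error plus $2^{-T_i}$, each below $\abs{I_i}/5$), not $\abs{I_i}/5$ as you assert; that is consistent with $\alpha\in I_{i+1}$ being at distance exactly $\abs{I_i}/5$ from $q_i$, so no contradiction follows as written. The fix is mechanical: demand that $2^{-T_i}$ and the Euler error each be below $\abs{I_i}/10$, which your schedule achieves for all sufficiently large $i$ by the same limiting computation. Second, in both halves the Euler iterates are not a priori confined to the box $[0,\beta]^S$ on which you computed the Lipschitz constant $L$ and the bound $M$; since the vector field is a polynomial, it is not globally Lipschitz, and the standard global error bound needs the iterates to stay where $L$ is valid. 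The usual repair is to compute $L$ and $M$ on the enlarged box $[-1,\beta+1]^S$ and observe that as long as the accumulated error remains below $1$ the iterates stay in it, which the same choice of $h$ guarantees. Neither issue changes the architecture of your proof.
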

\begin{proof}
   Let $\alpha \in \R_{RTCRN}$, and let $N = (S, R)$ and $X \in S$ testify to this fact. Let $Y_1, \ldots, Y_n$ be the distinct species in $S$. Then the ODEs \eqref{eq:total_rate_of_change} can be written in the form 
   \begin{gather}
    y_1' = f_1(y_1,\ldots,y_n),\nonumber\\
    \vdots \label{eq:system-of-odes-1}\\
    y_n' = f_n(y_1,\ldots,y_n),\nonumber
    \end{gather}
    where $f_1,\ldots,f_n$ are polynomials with integer coefficients. By the boundedness condition \eqref{def:rt_boundedness} and Theorem 16 of \cite{jGraPou16}, the solution $\by: [0, \infty) \to [0, \infty)^n$ of \eqref{eq:system-of-odes-1} is polynomial time computable. It follows by the real-time convergence condition \eqref{def:rt_convergence} that $\alpha$ is computable in polynomial time in the sense of Ko \cite{oKo91}. Hence, $\alpha \in \R_{comp}$.
    
    It is well known \cite{oKo91} that not every computable real is computable in polynomial time, so the preceding paragraph proves the lemma.
\end{proof}

\section{Lyapunov CRN Computability}
This section defines a subclass of $\R_{RTCRN}$, namely, the class $\R_{LCRN}$ of all Lyapunov CRN-computable real numbers.
The main theorem of this section is the fact that $\R_{LCRN}$ is a field. 

Our definition of $\R_{LCRN}$ uses the stability theory of ordinary differential equations.
We review the elements of this theory that we need here, referring the reader to standard textbooks (\eg, \cite{oHirSma74,oTesc12}) for more thorough treatments.

We first note that the ordinary differential equations \eqref{eq:total_rate_of_change} of a CRN $N=(S,R)$ are \emph{autonomous}, meaning that they only depend on the time $t$ via the species concentrations $y(t)$.
Hence, if we let $Y_1,\ldots,Y_n$ be the distinct species in $S$, then the ODEs \eqref{eq:total_rate_of_change} can be written as
\begin{gather}
    y_1' = f_1(y_1,\ldots,y_n),\nonumber\\
    \vdots\label{eq:system-of-odes}\\
    y_n' = f_n(y_1,\ldots,y_n),\nonumber
\end{gather}
where $f_1,\ldots,f_n:\R^n\rightarrow\R$ are polynomials.
If we let $\vec{f}_N:\R^n\rightarrow\R^n$ be the function whose components are $f_1,\ldots,f_n$, then \eqref{eq:system-of-odes} can be written in the vector form
\begin{equation}
    \bx' = \vec{f}_N(\bx).
\end{equation}

A \emph{fixed point} of the CRN $N$ is a state $\bz\in\td^S$ such that $\mathbf{f}_N(\bz)=0$.
A state $\bz$ of $N$ is \emph{exponentially stable} if there exist $\alpha,\delta,C\in(0,\infty)$ such that, for all $\bxn\in\td^S$ with $\abs{\bxn-\bz}\le\delta$, if $N$ is initialized with $\bx(0)=\bxn$, then, for all $t\in\td$,
\begin{equation}
    \abs{\bx(t)-\bz}\le C e^{-\alpha t}\abs{\bx(0)-\bz}.
\end{equation}
It is easy to see that an exponentially stable state of $N$ must be a fixed point of $N$.

In this paper, we define a real number $\alpha$ to be \emph{Lyapunov computable by chemical reaction networks} (briefly, \emph{Lyapunov CRN-computable}), and we write $\alpha\in\R_{LCRN}$, if there exist a CRN $N=(S,R)$, a species $X\in S$, and a state $\bz\in\td^S$ with $\bz(X)=\abs{\alpha}$ such that the following conditions hold.
\begin{enumerate}[1]
    \item (integrality).
    The CRN $N$ is integral as in \eqref{def:rt_integrality}.

    \item (boundedness).
    Concentrations are bounded as in \eqref{def:rt_boundedness}.

    \item (exponential stability).
    $\bz$ is an exponentially stable state of $N$.

    \item (convergence).
    If $\bx(t) \in \td^S$ is the state of $N$ at time $t$, then
    \[
        \bz(0)=\mathbf{0} \Longrightarrow \lim_{t \rightarrow\infty} \bx(t) = \bz.
    \]
    (Here $\mathbf{0}$ is the state of $N$ defined by $\mathbf{0}(Y)=0$ for all $Y \in S$.)
   \end{enumerate}
   
A well known matrix characterization of exponential stability is useful for investigating the set $R_{LCRN}$.

The \emph{Jacobian matrix} of the CRN $N$ is the Jacobian matrix of $\mathbf{f}_N$, \ie, the $n\times n$ matrix
\[
    J_N = 
    \begin{pmatrix}
        \frac{\partial f_1}{\partial y_1} & \cdots & \frac{\partial f_1}{\partial y_n} \\
        \vdots  & \ddots & \vdots  \\
        \frac{\partial f_n}{\partial y_1} & \cdots & \frac{\partial f_n}{\partial y_n}
    \end{pmatrix}.
\]
More precisely, the \emph{Jacobian matrix} of $N$ \emph{in a state} $\bx\in\td^S$ is the matrix $J_N(\bx)$ in which each of the partial derivatives in $J_N$ is evaluated at the point $\bx$.
The \emph{eigenvalues} of the CRN $N$ \emph{in a state} $\bx\in\td^S$ are the eigenvalues of the matrix $J_N(\bx)$, \ie, the numbers $\lambda\in\C$ for which there exists $\by\in\R^n$ such that $J_N(\bx)(\by)=\lambda\by$.

Lyapunov's \emph{exponential stability theorem}, specialized to CRNs, says the following. 

\begin{theorem}\cite{oLyap92,oHirSma74}
    A fixed point $\bz$ of a CRN $N$ is exponentially stable if and only if all the eigenvalues of $N$ in state $\bz$ have negative real parts. 
\end{theorem}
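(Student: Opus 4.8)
The plan is to recognize this as the classical principle of linearized (Lyapunov) stability and to prove the two directions separately by the standard perturbation estimates; since the statement is exactly Lyapunov's theorem, I would present only a sketch and cite \cite{oLyap92,oHirSma74} for the routine parts. First I would translate the fixed point to the origin: with $\vec{u} = \bx - \bz$, the mass-action ODE $\bx' = \mathbf{f}_N(\bx)$ becomes $\vec{u}' = J\vec{u} + \vec{g}(\vec{u})$, where $J = J_N(\bz)$ and $\vec{g}(\vec{u}) = \mathbf{f}_N(\bz + \vec{u}) - J\vec{u}$. Because the components of $\mathbf{f}_N$ are polynomials, Taylor's theorem gives $\vec{g}(\mathbf{0}) = \mathbf{0}$ and $D\vec{g}(\mathbf{0}) = 0$, hence a bound $\abs{\vec{g}(\vec{u})} \le \psi(\abs{\vec{u}})\,\abs{\vec{u}}$ for $\vec{u}$ near $\mathbf{0}$, with $\psi$ nondecreasing and $\psi(r) \to 0$ as $r \to 0^+$. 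Everything else follows from this splitting of the dynamics into a linear part plus a superlinearly small remainder.

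For the ``if'' direction I would assume every eigenvalue of $J$ has real part below $-2\alpha$ for some $\alpha > 0$, so that the Jordan form of $J$ gives a constant $C$ with $\abs{e^{Jt}} \le Ce^{-2\alpha t}$ for all $t \ge 0$ (the polynomial factors from nontrivial Jordan blocks are absorbed by the exponential). Writing the solution in Duhamel form, $\vec{u}(t) = e^{Jt}\vec{u}(0) + \int_0^t e^{J(t-s)}\vec{g}(\vec{u}(s))\,ds$, I would choose $\delta > 0$ so small that $\abs{\vec{u}} \le \delta$ forces $\abs{\vec{g}(\vec{u})} \le \varepsilon\abs{\vec{u}}$ with $C\varepsilon < \alpha$, and then apply a Gr\"onwall estimate to $e^{\alpha t}\abs{\vec{u}(t)}$ together with a standard bootstrap showing that an orbit started close enough to $\bz$ never leaves the $\delta$-ball; this yields $\abs{\vec{u}(t)} \le C'e^{-\alpha' t}\abs{\vec{u}(0)}$ for suitable $\alpha', C' > 0$. (Alternatively I would solve the Lyapunov equation $J^\top P + PJ = -I$ for a positive definite matrix $P$ and verify that $V(\vec{u}) = \vec{u}^\top P\vec{u}$ is a strict Lyapunov function near $\mathbf{0}$, which gives the decay more cleanly.) Since CRN trajectories that start in $\td^S$ remain in $\td^S$, the nonnegativity restriction in the definition of exponential stability is harmless, so $\bz$ is exponentially stable.

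For the ``only if'' direction I would use a rescaling argument. Assume $\bz$ is exponentially stable with constants $\alpha, \delta, C$. For a unit vector $\vec{v}$ (an inward direction if $\bz$ lies on the boundary of $\td^S$ — such directions still span $\R^n$; for the fixed points used later $\bz$ is interior, so $\vec{v}$ is arbitrary) and small $\eta \in (0, \delta]$, let $\vec{u}_\eta$ solve the shifted equation with $\vec{u}_\eta(0) = \eta\vec{v}$ and set $\vec{w}_\eta = \vec{u}_\eta/\eta$, so $\vec{w}_\eta' = J\vec{w}_\eta + \eta^{-1}\vec{g}(\eta\vec{w}_\eta)$ and $\vec{w}_\eta(0) = \vec{v}$. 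Exponential stability gives $\abs{\vec{u}_\eta(t)} \le C\eta e^{-\alpha t}$, hence $\abs{\vec{w}_\eta(t)} \le Ce^{-\alpha t}$ and $\eta^{-1}\abs{\vec{g}(\eta\vec{w}_\eta(t))} \le C\psi(C\eta)$ for all $t$; estimating the Duhamel integral against the linear flow $e^{Jt}\vec{v}$ then shows $\vec{w}_\eta \to e^{Jt}\vec{v}$ uniformly on each compact interval $[0, T]$ as $\eta \to 0$. Passing the bound to the limit gives $\abs{e^{Jt}\vec{v}} \le Ce^{-\alpha t}$ for all $t \ge 0$, hence $\abs{e^{Jt}} \le C'e^{-\alpha t}$; since the spectral radius of $e^{Jt}$ equals $e^{t\max_\lambda \operatorname{Re}\lambda}$ and is bounded by $\abs{e^{Jt}}$, letting $t \to \infty$ forces $\operatorname{Re}\lambda \le -\alpha < 0$ for every eigenvalue $\lambda$ of $J$.

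I expect the main obstacle to lie in packaging these two perturbation arguments correctly rather than in any isolated computation. In the ``if'' direction the delicate step is the bootstrap that traps the orbit in the neighborhood where the linear term dominates $\vec{g}$; in the ``only if'' direction it is the rescaling limit, and there it is essential that the hypothesis is \emph{exponential} stability rather than merely asymptotic stability — for instance $x' = -x^3$, the mass-action ODE of the integral CRN $3X \goesto{1} 2X$, is asymptotically stable at $0$ yet has a zero eigenvalue there and only algebraic decay — since the exponential bound is exactly what survives the passage to the limit in $\eta$.
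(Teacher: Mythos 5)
Your sketch is correct, but note that the paper does not prove this statement at all: it is imported as a known classical theorem, cited to Lyapunov and Hirsch--Smale, so there is no internal proof to compare against. What you have written is the standard textbook argument --- translate the fixed point to the origin, split the vector field into $J\vec{u}$ plus a remainder $\vec{g}$ with $\abs{\vec{g}(\vec{u})}\le\psi(\abs{\vec{u}})\abs{\vec{u}}$, run Duhamel/Gr\"onwall with a trapping bootstrap for the ``if'' direction, and a rescaling limit for the ``only if'' direction --- and both halves are sound. Two points deserve emphasis in this setting. First, the paper's definition of exponential stability quantifies only over perturbed initial states $\bxn\in\td^S$, so in the ``only if'' direction you may probe $J_N(\bz)$ only along directions pointing into the nonnegative orthant at $\bz$; you flag this, and your resolution is the right one: those directions form a cone spanning $\R^n$ (every vector is a difference of two of them), so the bound $\abs{e^{Jt}\vec{v}}\le Ce^{-\alpha t}$ on the cone upgrades by linearity to an operator-norm bound, from which the spectral-radius comparison forces every eigenvalue to have real part at most $-\alpha$. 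Second, forward invariance of $\td^S$ under mass-action kinetics is what makes the restriction to the orthant harmless in the ``if'' direction, as you assert. Your example $x'=-x^3$, realized by $3X\goesto{1}2X$, correctly isolates why the hypothesis must be \emph{exponential} rather than merely asymptotic stability for the rescaling limit to retain any information. The only genuinely schematic step is the continuation argument keeping the orbit inside the $\delta$-ball in the ``if'' direction, but that is routine, and the alternative you mention via the Lyapunov equation $J^\top P+PJ=-I$ sidesteps it entirely.
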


We note the following easy inclusions.

\begin{lemma}\label{lemma:q_in_lcrn_in_rtcrn}
    $\Q\subseteq\R_{LCRN}\subseteq\R_{RTCRN}$.
\end{lemma}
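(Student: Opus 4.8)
The plan is to prove the two inclusions separately, reusing the rational-number CRNs already constructed and leaning on Lemma~\ref{lemma:real-time-dilation} for the second one.

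For $\Q\subseteq\R_{LCRN}$, I would revisit the CRNs built in the proof of Lemma~\ref{lemma:rationals-rtcrn}. For $\alpha=0$ the trivial CRN $N=(\{X\},\emptyset)$ has $\bz=\mathbf 0$ as a fixed point, and since $\mathbf f_N\equiv 0$ one must instead note that $x(t)\equiv 0$, so one should pick a slightly less degenerate witness: the CRN $\emptyset\goesto{1}X$, $X\goesto{1}\emptyset$ has $\mathbf f_N(x)=1-x$, the unique fixed point $\bz=(1)$\dots but that gives $\alpha=1$, not $0$. So for $\alpha=0$ I would instead observe the empty-reaction CRN has \emph{every} state as a (non-strictly-stable) fixed point, which fails exponential stability; the clean fix is to use $X\goesto{1}\emptyset$ alone, giving $x'=-x$, fixed point $\bz=(0)$, Jacobian $(-1)$, hence exponentially stable, and $x(0)=0\Rightarrow x(t)\equiv 0\to 0=\bz$. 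For $\alpha\in\Q\setminus\{0\}$ with $\abs\alpha=a/b$, the CRN $\emptyset\goesto{a}X$, $X\goesto{b}\emptyset$ has $\mathbf f_N(x)=a-bx$, unique fixed point $\bz=(a/b)=(\abs\alpha)$, Jacobian $J_N(\bz)=(-b)$ with eigenvalue $-b<0$, so by Lyapunov's exponential stability theorem $\bz$ is exponentially stable; integrality and boundedness were already checked in Lemma~\ref{lemma:rationals-rtcrn} (the solution $x(t)=\tfrac ab(1-e^{-bt})$ is bounded and converges to $\bz$ from the zero initial state). This verifies all four conditions, so $\alpha\in\R_{LCRN}$.

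For $\R_{LCRN}\subseteq\R_{RTCRN}$, let $\alpha\in\R_{LCRN}$ be witnessed by $N=(S,R)$, $X\in S$, and the exponentially stable state $\bz$ with $\bz(X)=\abs\alpha$. Integrality of $N$ and boundedness of its concentrations from the zero start are common to both definitions, so the only thing to upgrade is convergence: from $\lim_{t\to\infty}\bx(t)=\bz$ I need the \emph{exponential} rate $\abs{x(t)-\abs\alpha}\le e^{-\gamma t}$ for $t\ge\tau$, for \emph{some} $\tau,\gamma$, after which Lemma~\ref{lemma:real-time-dilation} promotes this to membership in $\R_{RTCRN}$. The key observation is that exponential stability gives a $\delta>0$ and constants $C,a>0$ such that any trajectory entering the $\delta$-ball around $\bz$ thereafter converges to $\bz$ at rate $Ce^{-at}$ times its distance at entry. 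Since $\bx(t)\to\bz$, there is a finite time $\tau_0$ with $\abs{\bx(\tau_0)-\bz}\le\delta$; applying the exponential-stability estimate to the trajectory restarted at time $\tau_0$ (the system is autonomous, so this is legitimate) yields $\abs{\bx(t)-\bz}\le C e^{-a(t-\tau_0)}\abs{\bx(\tau_0)-\bz}\le C\delta e^{a\tau_0}e^{-at}$ for all $t\ge\tau_0$. In particular $\abs{x(t)-\abs\alpha}\le\abs{\bx(t)-\bz}\le e^{-\gamma t}$ holds for all $t\ge\tau$ once $\tau$ is chosen large enough (with, say, $\gamma=a/2$), so $\Phi_{\tau,\gamma}(\alpha)$ holds and Lemma~\ref{lemma:real-time-dilation} gives $\alpha\in\R_{RTCRN}$.

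The main obstacle is a subtle one: exponential stability as defined is a statement about trajectories that \emph{start} within the $\delta$-ball, whereas the zero-initialized trajectory only reaches that ball asymptotically. The fix — restarting the autonomous system at the finite time $\tau_0$ when the trajectory first lies within distance $\delta$ of $\bz$ — is the crux of the argument, and it relies on the convergence condition (4) guaranteeing that such a $\tau_0$ exists. One should also be slightly careful that the $\delta$-ball is taken inside $\td^S$, which matches the definition of exponential stability given in the excerpt, so no issue arises there. Everything else is bookkeeping: matching the two integrality conditions, noting boundedness transfers verbatim, and invoking Lemma~\ref{lemma:real-time-dilation} to absorb the arbitrary $\tau$ and $\gamma$.
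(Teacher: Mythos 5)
Your proof is correct and follows essentially the same route as the paper's: the explicit one-species witness $\emptyset\goesto{a}X$, $X\goesto{b}\emptyset$ with eigenvalue $-b$ for nonzero rationals, and Lemma~\ref{lemma:real-time-dilation} for the second inclusion. Two of your refinements are worth noting. First, your handling of $\alpha=0$ actually corrects the paper, which claims the reaction-free CRN works ``trivially''; as you observe, that CRN has $\mathbf{f}_N\equiv 0$, so every state is a fixed point and no state is exponentially stable under the stated definition, whereas your witness $X\goesto{1}\emptyset$ (Jacobian $-1$, zero trajectory stays at the fixed point) genuinely satisfies all four conditions. Second, your restart-at-$\tau_0$ argument for $\R_{LCRN}\subseteq\R_{RTCRN}$ --- using condition (4) to get the trajectory into the $\delta$-ball in finite time and then invoking exponential stability of the autonomous flow from there --- is precisely the content the paper compresses into ``immediate from the definitions and Lemma~\ref{lemma:real-time-dilation}''; spelling it out is the right thing to do, since exponential stability alone says nothing about the zero-initialized trajectory.
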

\begin{proof}
    To see that $\Q\subseteq\R_{LCRN}$, it suffices to show that $\td\cap\Q\subseteq\R_{LCRN}$.
    For this, let $\alpha\in\td\cap\Q$.
    If $\alpha = 0$, then a one-species CRN with no reactions trivially affirms that $\alpha\in\R_{LCRN}$, so assume that $\alpha > 0$.
    Then there exist $p, q \in \Z^+$ such that $\alpha = \frac{p}{q}$.
    The CRN 
    \begin{align*}
        \emptyset &\goesto{p} X\\
        X         &\goesto{q} \emptyset
    \end{align*}
    computes $\alpha$ with species $X$.
    The ODE for this CRN is
    \[
        x'=p-qx
    \] 
    and the solution for the initial value $x(0)=0$ is $ x(t)=-\frac{p}{q}e^{-qt}+\frac{p}{q}$.
    We thus have $\lim_{t\to\infty}x(t)=\alpha$.
    Moreover, since we have an ODE system with only one variable, the eigenvalue at the fixed point is simply the derivative of $p-qx$, \ie, $-q$.
    Hence $\alpha\in \R_{LCRN}$.
    
    The inclusion $\R_{LCRN}\subseteq\R_{RTCRN}$ is immediate from the definitions of these classes and Lemma~\ref{lemma:real-time-dilation}.
\end{proof}

The rest of this section proves that $\R_{LCRN}$ has the closure properties of a field.

\begin{lemma}\label{lemma:lcrn_addition}
    For all $\alpha,\beta\in\td\cap\R_{LCRN}$, we have $\alpha+\beta\in\R_{LCRN}$.
\end{lemma}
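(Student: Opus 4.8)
The plan is to build the CRN for $\alpha + \beta$ by running CRNs for $\alpha$ and $\beta$ in parallel on disjoint species sets and then adding their outputs into a fresh species $W$. Let $N_\alpha = (S_\alpha, R_\alpha)$ with output species $X_\alpha$ and fixed point $\bz_\alpha$ testify that $\alpha \in \R_{LCRN}$, and similarly $N_\beta = (S_\beta, R_\beta)$, $X_\beta$, $\bz_\beta$ for $\beta$; by renaming we may assume $S_\alpha \cap S_\beta = \emptyset$. Introduce a new species $W \notin S_\alpha \cup S_\beta$ and add the reactions
\begin{align*}
    X_\alpha &\goesto{1} X_\alpha + W,\\
    X_\beta  &\goesto{1} X_\beta + W,\\
    W        &\goesto{1} \emptyset,
\end{align*}
so that $X_\alpha$ and $X_\beta$ act as catalysts producing $W$, which decays at unit rate. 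The resulting CRN $N$ has species set $S = S_\alpha \cup S_\beta \cup \{W\}$ and is integral since all new rate constants are $1$ and $N_\alpha, N_\beta$ are integral. Its output species is $W$, and the candidate fixed point is $\bz$ defined by $\bz \restriction S_\alpha = \bz_\alpha$, $\bz \restriction S_\beta = \bz_\beta$, and $\bz(W) = \bz_\alpha(X_\alpha) + \bz_\beta(X_\beta) = \alpha + \beta$.

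The key steps are then: first, observe that because the new reactions are catalytic in the species of $S_\alpha \cup S_\beta$, they do not alter the ODEs for those species, so the $S_\alpha$-component of $N$'s dynamics is exactly that of $N_\alpha$ and likewise for $S_\beta$; hence from the zero initial state, $x_\alpha(t) \to \alpha$ and $x_\beta(t) \to \beta$, establishing the convergence condition (4) for $W$ once we know $w(t) \to \alpha + \beta$. The ODE for $W$ is $w' = x_\alpha(t) + x_\beta(t) - w$, a linear ODE with an exponentially-converging forcing term, so $w(t) \to \alpha + \beta$ by a standard argument (variation of parameters, or appeal to the fact that $w - (x_\alpha + x_\beta)$ satisfies a stable linear equation with decaying input). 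Second, boundedness (2): $x_\alpha, x_\beta$ are bounded by hypothesis on $N_\alpha, N_\beta$, and then $w$ is bounded because its production rate is bounded and it has unit decay; concretely $w(t) \le \max\{w(0), \sup_t(x_\alpha(t)+x_\beta(t))\}$. Third, exponential stability (3): compute the Jacobian $J_N(\bz)$. Since $S_\alpha$-equations involve only $S_\alpha$-variables and $S_\beta$-equations only $S_\beta$-variables, and the $W$-equation is $w' = x_\alpha + x_\beta - w$, the Jacobian is block lower-triangular with diagonal blocks $J_{N_\alpha}(\bz_\alpha)$, $J_{N_\beta}(\bz_\beta)$, and the $1\times 1$ block $[-1]$. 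Therefore its eigenvalues are exactly the eigenvalues of $N_\alpha$ at $\bz_\alpha$, those of $N_\beta$ at $\bz_\beta$, and $-1$; all have negative real part by hypothesis and Lyapunov's theorem, so $\bz$ is exponentially stable.

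I expect the main obstacle to be the bookkeeping around two subtleties. The first is that the definition of $\R_{LCRN}$ requires $\alpha, \beta \in \td \cap \R_{LCRN}$ (the statement says exactly this), so the outputs are genuine nonnegative concentrations and the sum $\alpha + \beta = |\alpha + \beta|$ causes no absolute-value mismatch — but one should state this explicitly. The second, more delicate point is verifying that the zero-initialized trajectory of $N$ really does have its $S_\alpha$-part equal to the zero-initialized trajectory of $N_\alpha$: this uses that the added reactions have $X_\alpha$ as a catalyst (equal reactant and product multiplicity) so $\Delta\rho(Y) = 0$ for all $Y \in S_\alpha$ on those reactions, hence $\vec{f}_N$ restricted to $S_\alpha$-coordinates depends only on $S_\alpha$-coordinates and coincides with $\vec{f}_{N_\alpha}$; uniqueness of solutions to the IVP then gives the claim. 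Everything else is routine: the convergence of the scalar linear ODE $w' = (\text{exponentially convergent}) - w$ to the limit of its forcing term, and reading eigenvalues off a block-triangular matrix. Finally, invoke Lemma~\ref{lemma:real-time-dilation} only if a time rescaling is needed — here it is not, since we have directly produced a CRN satisfying all four defining conditions of $\R_{LCRN}$.
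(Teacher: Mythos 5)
Your proposal is correct and matches the paper's proof essentially exactly: the same three added reactions (catalytic production of the new output species by each input species plus unit-rate decay), the same linear ODE $w' = x_\alpha + x_\beta - w$ with convergence to the limit of the forcing term, and the same block-lower-triangular Jacobian argument giving eigenvalues of $J_{N_\alpha}$, $J_{N_\beta}$, and $-1$. The only cosmetic difference is that the paper evaluates the limit via an explicit integral and L'H\^{o}pital's rule where you cite the standard stable-linear-ODE argument; your extra remarks on boundedness and on why the catalytic reactions leave the subnetwork dynamics unchanged are correct and, if anything, more careful than the paper.
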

\begin{proof}
    Let $\alpha,\beta\in\td\cap\R_{LCRN}$, and let $N_1= (S_1, R_1)$ and $N_2= (S_2, R_2)$ be CRNs that testify to this by computing $\alpha$ and $\beta$ with species $X$ and $Y$, respectively.
    We also let $\bz_1\in\td^{S_1}$ and $\bz_2\in\td^{S_2}$ be the equilibrium points that $N_1$ and $N_2$ use to compute $\alpha$ and $\beta$, \ie, $\bz_1(X)=\alpha$ and $\bz_2(Y)=\beta$.

    Let $N=(S,R)$ be the CRN defined by $S=S_1\cup S_2\cup S_3$ and $R=R_1\cup R_2\cup R_3$ where $S_3=\{U\}$ and $R_3$ consists of the three reactions 
    \begin{align*}
        X &\goesto{1} X + U\\
        Y &\goesto{1} Y + U\\
        U &\goesto{1} \emptyset.
    \end{align*}
    Note that the ODE for $U$ is
    \[
        \D{u}=x+y-u,
    \] 
    and the solution for $u(t)$ with all species initialized to zero is
    \[
        u(t)=ce^{-t}+e^{-t}\int_{1}^{t}e^s(x(s)+y(s))ds,
    \] 
    for some constant $c$.
    Hence,
    \[
        \lim_{t \to \infty}u(t)
            = 0+\lim_{t \to \infty}\frac{\int_{1}^{t}e^s(x(s)+y(s))ds}{e^t}.
    \]
    By L'H\^{o}pital's rule, we have
    \[
        \lim_{t \to \infty}u(t)
            = \lim_{t \to \infty}[x(t)+y(t)]=\alpha+\beta.
    \]

    It remains to be shown that the equilibrium point $(\bz_1,\bz_2,\alpha+\beta)$ is exponentially stable.
    First, we fix an order $S_1$, $S_2$, $S_3$ of the species in the Jacobian matrix $J_N$.
    We use $J_{i, j}$ to denote submatrix of $J_N$ that contains the partial derivatives of each species $A\in S_{i}$ with respect to each species $B\in S_{j}$ for $i,j\in\{1,2,3\}$.
    Then $J_N$ can be written as follows.
    \[
        J_{N} =
            \begin{pmatrix}
                J_{N_1} & \vec{0} & \vec{0} \\
                \vec{0} & J_{N_2} & \vec{0} \\
                J_{3, 1} & J_{3, 2} & J_{3, 3}
            \end{pmatrix}                  
    \]
    Since $N_1$ and $N_2$ have disjoint species, it is clear that $J_{1, 2}=\vec{0}$ and $J_{2, 1}=\vec{0}$.
    Furthermore, $N_1$ and $N_2$ are unaffected by the species $Y$, so $J_{1,3}=\vec{0}$ and $J_{2, 3}=\vec{0}$.
    We also note that $J_{3, 3}$ contains one element $\frac{\partial}{\partial u}(x+y-u) = -1$. 

    Since $J_N$ is a lower triangular block matrix,
    \begin{equation}\label{eq:det}
        \lvert J - \lambda I \rvert = 
            \lvert J_{N_{1}} - \lambda I_1 \rvert  \lvert J_{N_{2}} - \lambda I_2 \rvert  \lvert -1 - \lambda \rvert
    \end{equation}
    We can now conclude that the eigenvalues of $J_{N}$ are
    \begin{enumerate}[1)]
        \item the eigenvalues of $J_{N_{1}}$,
        \item the eigenvalues of $J_{N_{2}}$, and
        \item the eigenvalue of $J_{3, 3}$, which is $-1$.
    \end{enumerate}
    Since $\alpha, \beta$ are both in $\R_{LCRN}$, the real part of the eigenvalues of $J_{N_1}$ and  $J_{N_2}$ are all negative.
    Thus all the eigenvalues of $J_N$ have negative real parts.
    Therefore, $\alpha+\beta \in \R_{LCRN}$. 
\end{proof}

\begin{lemma}\label{lemma:lcrn_reciprocal}
    For each $0<\alpha\in\R_{LCRN}$, we have $\frac{1}{\alpha}\in\R_{LCRN}$.
\end{lemma}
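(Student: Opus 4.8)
The plan is to mimic the structure of Lemma~\ref{lemma:lcrn_addition}: start from a CRN $N_1 = (S_1, R_1)$ with equilibrium $\bz_1$ that computes $\alpha > 0$ via species $X$, and adjoin a small ``gadget'' subnetwork whose designated species $V$ stabilizes to $\frac{1}{\alpha}$. The natural gadget is one realizing the ODE $\dot{v} = 1 - xv$ (via reactions $\emptyset \goesto{1} V$ and $X + V \goesto{1} X$, so that $X$ acts as a catalyst), whose unique positive fixed point, given $x \to \alpha$, satisfies $v = \frac{1}{\alpha}$. Since $\alpha \in \R_{LCRN} \subseteq \td$ and $\alpha > 0$, and since $N_1$ is bounded, the concentration $x(t)$ stays in a bounded interval; boundedness of $v(t)$ then needs a separate argument, which I address below.

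First I would verify the convergence condition: assuming $x(t) \to \alpha$, I want $v(t) \to \frac{1}{\alpha}$. Writing $v(t) = e^{-\int_0^t x(s)\,ds}\left(c + \int_0^t e^{\int_0^s x(u)\,du}\,ds\right)$ and applying L'H\^{o}pital's rule (exactly as in the addition lemma, using that $\int_0^t x(s)\,ds \to \infty$ because $x(s) \to \alpha > 0$) gives $\lim_{t\to\infty} v(t) = \lim_{t\to\infty} \frac{1}{x(t)} = \frac{1}{\alpha}$. Boundedness: since $x(t) \ge 0$ always, $\dot v \le 1$, but that only gives linear growth; the better bound is that once $x(t)$ is bounded below by some $\varepsilon > 0$ (which happens eventually, or one can shift by starting $N_1$ appropriately), $\dot v \le 1 - \varepsilon v$, forcing $v$ to stay below $\max\{v(0), 1/\varepsilon\}$ plus transient terms — hence bounded. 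I would package this cleanly, perhaps invoking Lemma~\ref{lemma:real-time-dilation} or simply noting that $v$ is continuous and has a finite limit, hence is bounded on $[0,\infty)$.

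Next I would check exponential stability of the combined equilibrium $\bz = (\bz_1, \tfrac{1}{\alpha})$. Ordering the species of $S_1$ first and $V$ last, the Jacobian of $N = N_1 \cup \text{gadget}$ is block lower triangular:
\[
J_N = \begin{pmatrix} J_{N_1} & \vec{0} \\ J_{2,1} & -\alpha \end{pmatrix},
\]
where the lower-right entry is $\frac{\partial}{\partial v}(1 - xv) = -x$ evaluated at $\bz$, namely $-\alpha < 0$, and $J_{2,1}$ collects $\frac{\partial}{\partial x}(1-xv) = -v = -\tfrac{1}{\alpha}$ in the $X$-column and zeros elsewhere. By the block-triangular determinant identity (as in \eqref{eq:det}), the eigenvalues of $J_N$ are those of $J_{N_1}$ together with $-\alpha$. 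Since $\alpha \in \R_{LCRN}$, all eigenvalues of $J_{N_1}$ at $\bz_1$ have negative real part, and $-\alpha < 0$, so by Lyapunov's theorem $\bz$ is exponentially stable. Integrality is immediate since all three new rate constants equal $1$.

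The main obstacle I anticipate is the boundedness of $v$. Unlike the addition gadget, where the new species $u$ is driven by bounded inputs $x+y$ and a $-u$ self-decay, here the decay rate of $v$ is $-x(t)v$, which could in principle be small or zero for a long initial transient if $x(t)$ starts near $0$ — and indeed $x(0) = 0$ under our initialization convention. I would handle this by noting that the convergence $x(t) \to \alpha > 0$ guarantees some time $t_0$ and $\varepsilon \in (0,\alpha)$ with $x(t) \ge \varepsilon$ for all $t \ge t_0$; on $[0,t_0]$, $v$ grows at most linearly so $v(t_0) \le t_0 + v(0)$ is finite, and on $[t_0,\infty)$ the comparison $\dot v \le 1 - \varepsilon v$ keeps $v$ below $\max\{v(t_0), 1/\varepsilon\}$. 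This gives a uniform bound $\beta$ on $v$, and the bound on all other species carries over from $N_1$. A secondary subtlety is confirming $\bz \in \td^S$, i.e. $\tfrac1\alpha \ge 0$, which holds since $\alpha > 0$; and that $\bz$ is genuinely a fixed point, i.e. $1 - \alpha\cdot\tfrac1\alpha = 0$, which is immediate.
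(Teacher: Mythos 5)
Your proposal is correct and matches the paper's proof essentially exactly: the same two-reaction gadget giving $\dot v = 1 - xv$, the same block lower-triangular Jacobian with the extra eigenvalue $-\alpha < 0$, and the same appeal to Lyapunov's theorem. Your extra care with the boundedness and convergence of $v$ (the comparison argument once $x(t) \ge \varepsilon$) fills in details the paper dismisses with ``it is easy to see,'' but it is the same argument, not a different route.
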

\begin{proof}
    Let $\alpha\in(0,\infty)\cap\R_{LCRN}$, and let $N_1= (S_1, R_1)$ be a CRN that testifies to this by computing $\alpha$ with species $X$.
    We also let $\bz_1\in\td^{S_1}$be the equilibrium point that $N_1$ uses to compute $\alpha$, \ie, $\bz_1(X)=\alpha$.

    Let $N=(S,R)$ be the CRN with $S=S_1\cup S_2$ and $R=R_1\cup R_2$ where $S_1=\{Y\}$ and $R_2$ consists of the reactions
    \begin{align*}
        \emptyset &\goesto{1} Y\\
        X + Y     &\goesto{1} X
    \end{align*}
    so that
    \[
        \D{y} =  1 - xy.
    \]
    It is easy to see that the state $\bz=\left(\bz_1,\frac{1}{\alpha}\right)$ is the only reachable equilibrium point of $N$ from the initial state $\vec{0}$.
    We also note that the Jacobian matrix $J_N$ can be written as
    \[
        J_N =
            \begin{pmatrix}
                J_{N_1} & \vec{0} \\
                -y & -x
            \end{pmatrix}.
    \]
    We can see that $J_N(\bz)$ has eigenvalues with negative real parts, since the eigenvalues of $J_{N_1}$ have negative real parts and $-\alpha<0$.
    This implies that $\bz$ is exponentially stable, and therefore $\frac{1}{\alpha}\in\R_{LCRN}$.
\end{proof}

\begin{lemma}\label{lemma:lcrn_subtraction}
    For each $\alpha,\beta\in\R_{LCRN}$ with $\alpha\ge\beta\ge 0$, we have $\alpha-\beta\in\R_{LCRN}$.
\end{lemma}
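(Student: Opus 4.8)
The plan is to mimic the addition construction from Lemma~\ref{lemma:lcrn_addition}, but with a subtractive sink reaction, while being careful that the target concentration $\alpha - \beta$ stays nonnegative so that it is a legitimate state of a CRN. Let $N_1 = (S_1, R_1)$ and $N_2 = (S_2, R_2)$ testify that $\alpha, \beta \in \R_{LCRN}$, computing $\alpha$ and $\beta$ with species $X$ and $Y$ at equilibria $\bz_1$ and $\bz_2$ respectively. I would introduce a fresh species $U$ and the reactions
\begin{align*}
    X &\goesto{1} X + U,\\
    U + Y &\goesto{1} Y,\\
    U &\goesto{1} \emptyset,
\end{align*}
giving the ODE $u' = x - uy - u = x - (y+1)u$. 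Wait --- this converges to $x/(y+1)$, not $x - y$, so a single linear sink in $u$ is the wrong shape. Instead I would use $X \goesto{1} X+U$ and $U+Y \goesto{1} Y$ only (no spontaneous decay), so that $u' = x - uy$; but that converges to $x/y$, still wrong. The honest fix is to make the degradation of $U$ driven by $Y$ at rate matching production: use $X \goesto{1} X + U$ and $Y \goesto{1} Y$ is vacuous. The correct gadget is the standard ``subtraction by fast annihilation'' is not available because we need a fixed point, not a race. So the real construction should be: add a species $U$ with $X \goesto{1} X+U$, $U \goesto{1} \emptyset$, and $Y \goesto{1} Y + V$, $V \goesto{1} \emptyset$ — no. Let me restart the gadget: the clean way is $U' = x - y - u$, realized by $X \goesto{1} X+U$, $Y + U \goesto{1} Y$ giving $-yu$... the issue is CRNs cannot directly realize the constant-coefficient term $-y$ acting additively. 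The workaround, which I would adopt, is to compute $u$ tracking $\alpha$ via $X \goesto{1} X+U$, $U \goesto{1}\emptyset$ (so $u \to \alpha$), then use the reaction $V + Y \goesto{1} Y$ together with $U \goesto{1} U + V$ and... this still gives a quotient.

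Given these obstructions, the approach I would actually take is to reduce subtraction to addition and reciprocal, which are already available: write $\alpha - \beta$ directly is not a product/sum, but note that $\R_{LCRN}$ being closed under $+$ and reciprocals (Lemmas~\ref{lemma:lcrn_addition} and~\ref{lemma:lcrn_reciprocal}) gets us multiplication too (via $ab = ((a^{-1}+b^{-1})^{-1}$-type identities fail; rather $ab$ needs its own gadget $w' = xy - w$). Assuming multiplication is established elsewhere or analogously, one could try $\alpha - \beta = \alpha(1 - \beta/\alpha)$ when $\alpha > 0$, but $1 - \beta/\alpha \in [0,1]$ and subtracting from the constant $1$ is exactly the obstacle again. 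So the genuine content of this lemma is a direct dual-rail / sink construction, and I would commit to: introduce $U$ with reactions $X \goesto{1} X+U$ and $U+Y \goesto{2} U$ — no.

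The construction that works, and which I would present, is the following: add species $U$ and reactions
\begin{align*}
    X &\goesto{1} X + U,\\
    Y &\goesto{1} Y,\\
    U + Y &\goesto{1} U,
\end{align*}
no. The actual working gadget uses a catalytic degradation balanced against production: $X \goesto{1} X + U$ (rate $x$ in, $u$ up) and $U \goesto{1} \emptyset$ plus $Y \goesto{1} Y + W$, $W \goesto{1}\emptyset$, then $U + W \goesto{1} \emptyset$ — the annihilation $U+W \goesto{1}\emptyset$ with $u,w$ both fed at rates $\alpha,\beta$ drives $\min$-like behavior, not subtraction with a stable fixed point. Because of the nonnegativity constraint $u \ge 0$, subtraction genuinely has no smooth stable fixed point unless $\alpha > \beta$ strictly; when $\alpha = \beta$ the fixed point is at the boundary $u = 0$ where the Jacobian analysis degenerates. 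Therefore the \textbf{main obstacle} is handling the boundary case $\alpha = \beta$ (where $\alpha - \beta = 0$ is computed trivially by a no-reaction CRN, so this case should be dispatched first) and, for $\alpha > \beta$, finding reactions whose ODE for the new species $u$ is $u' = x - y - u$ despite CRNs' inability to subtract a species additively. I expect the resolution is to exploit that $x$ and $y$ are \emph{bounded} and converge, so one can use a construction with $u' = x - u - (\text{something}\cdot u)$ where the something converges to $0$ --- e.g. run $Y$'s computation to produce $\beta$ and feed $U+Y \goesto{1} Y$ scaled so the $-xy/\alpha$-type term, at equilibrium $x = \alpha$, yields net $u' = \alpha - \beta - u\cdot(\text{stuff})$; pin down the algebra so the unique reachable equilibrium has $u = \alpha - \beta$, then verify via the block-triangular Jacobian (exactly as in Lemma~\ref{lemma:lcrn_addition}) that the new diagonal entry is negative. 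I would structure the final proof as: (1) dispose of $\alpha = \beta$; (2) assume $\alpha > \beta$, exhibit the augmented CRN with one new species and three reactions; (3) compute the ODE and show the reachable-from-$\vec 0$ equilibrium has the new coordinate equal to $\alpha - \beta$; (4) write the Jacobian in block-lower-triangular form with blocks $J_{N_1}, J_{N_2}$, and a negative scalar, and invoke Lyapunov's theorem.
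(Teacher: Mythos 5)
There is a genuine gap: after correctly diagnosing the obstacle (a CRN cannot realize an additive term $-y$ in the rate equation for a new species $U$ without $U$ itself appearing as a reactant, so the target ODE $u' = x_1 - x_2 - u$ is not directly implementable) and correctly dispatching the boundary cases $\beta = 0$ and $\beta = \alpha$, you never actually produce a working construction. Every gadget you propose is abandoned, and your final plan (2)--(3) simply restates the goal -- ``exhibit the augmented CRN \ldots show the reachable equilibrium has the new coordinate equal to $\alpha - \beta$'' -- without supplying the reactions. The closest you come is the observation that one should ``reduce subtraction to addition and reciprocal,'' but you then pursue the dead end $\alpha(1 - \beta/\alpha)$ rather than the idea that makes the reduction work.

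The missing idea is to aim for $\frac{1}{\alpha - \beta}$ instead of $\alpha - \beta$. A difference of concentrations \emph{can} appear as a multiplicative coefficient on the new species, because catalytic reactions contribute terms proportional to $y$: the paper adds a species $Y$ with reactions $\emptyset \goesto{1} Y$, $X_1 + Y \goesto{1} X_1$, and $X_2 + Y \goesto{1} X_2 + 2Y$, yielding
\[
    \D{y} = 1 - (x_1 - x_2)\,y ,
\]
whose reachable fixed point has $y = \frac{1}{\alpha - \beta}$ and whose new Jacobian diagonal entry is $-(\alpha - \beta) < 0$ (this is exactly the block-lower-triangular argument you anticipated in step (4), and it is why the case $\alpha > \beta$ strictly is needed here). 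This shows $\frac{1}{\alpha-\beta} \in \R_{LCRN}$, and Lemma~\ref{lemma:lcrn_reciprocal} then gives $\alpha - \beta \in \R_{LCRN}$. Your scaffolding (case split, one new species, block-triangular Jacobian, appeal to Lyapunov's theorem) matches the paper's, but without this pivot to the reciprocal the proof cannot be completed along the lines you describe.
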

\begin{proof}
    Let $\alpha$ and $\beta$ be as given.
    If $\beta=0$, then $\alpha-\beta=\alpha\in\R_{LCRN}$ by hypothesis.
    If $\beta=\alpha$, then $\alpha-\beta\in\R_{LCRN}$ by Lemma~\ref{lemma:q_in_lcrn_in_rtcrn}.
    Assume, then, that $\alpha>\beta>0$, and let $N_1=(S_1,R_1)$ and $N_2=(S_2,R_2)$ testify that $\alpha,\beta\in\R_{LCRN}$ using species $X_1\in S_1$ and $X_2\in S_2$, respectively.
    We also let $\bz_1$ and $\bz_2$ be the fixed points in $N_1$ and $N_2$ that compute $\alpha$ and $\beta$, respectively.

    Let $N=(S,R)$ be the CRN defined by $S=S_1\cup S_2\cup\{Y\}$ and $R=R_1\cup R_2\cup R_3$ where $R_3$ consists of the three reactions
    \begin{align*}
        \emptyset &\goesto{1} Y\\
        X_1 + Y   &\goesto{1} X_1\\
        X_2 + Y   &\goesto{1} X_2 + 2Y.
    \end{align*}
    Note that the additional reactions do not affect the species in $S_1$ and $S_2$, and yield the following ODE for $Y$.
    \[
        \D{y} = 1 - (x_1 - x_2)y.
    \]
    If $N$ is initialized to the $\vec{0}$ state, then by the construction of $N$, $\lim_{t\to\infty}(x_1(t)-x_2(t))=\alpha-\beta$.
    It is then easy to show that $\bz=(\bz_1,\bz_2,\frac{1}{\alpha-\beta})$ is a fixed point of $N$ and that $\lim_{t\to\infty}y(t)=\frac{1}{\alpha-\beta}$.
    The Jacobian matrix $J_N$ also has the form,
    \[
        J_{N} =
            \begin{pmatrix}
                J_{N_1} & \vec{0} & \vec{0} \\
                \vec{0} & J_{N_2} & \vec{0} \\
                J_{3, 1} & J_{3, 2} & -(x_1-x_2)
            \end{pmatrix}                  
    \]
    and therefore $J_N(\bz)$ has all negative eigenvalues since $J_{N_1}$ and $J_{N_2}$ have negative eigenvalues by the hypothesis and the eigenvalue $-(\alpha-\beta)$ is negative.
    Thus, $\frac{1}{\alpha-\beta}\in\R_{LCRN}$.

    Finally, by Lemma~\ref{lemma:lcrn_reciprocal} we can conclude that $\alpha-\beta\in\R_{LCRN}$.
\end{proof}

\begin{corollary}\label{lemma:lcrn_add_subgroup}
    $\R_{LCRN}$ is an additive subgroup of $\R$.
\end{corollary}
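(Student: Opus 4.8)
The plan is to verify the three defining properties of an additive subgroup of $\R$ --- that it contains $0$, is closed under negation, and is closed under addition --- using only the lemmas already proved in this section, so that the corollary is essentially bookkeeping.

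First I would record two immediate observations. Since the definition of $\R_{LCRN}$ refers to a real number $\alpha$ only through $\abs{\alpha}$ (the witnessing CRN $N$, species $X$, and fixed point $\bz$ are required to satisfy $\bz(X)=\abs{\alpha}$), the set $\R_{LCRN}$ is symmetric under negation: $\alpha\in\R_{LCRN}$ if and only if $-\alpha\in\R_{LCRN}$. Second, $0\in\R_{LCRN}$ by Lemma~\ref{lemma:q_in_lcrn_in_rtcrn}, which in fact gives $\Q\subseteq\R_{LCRN}$. Hence only closure under addition needs genuine argument.

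For closure under addition, let $\alpha,\beta\in\R_{LCRN}$. By the negation symmetry just noted, $\abs{\alpha},\abs{\beta}\in\td\cap\R_{LCRN}$. I would then split into cases according to the signs of $\alpha$ and $\beta$. If $\alpha$ and $\beta$ have the same sign (including the case where either is $0$), then $\abs{\alpha+\beta}=\abs{\alpha}+\abs{\beta}$, which lies in $\R_{LCRN}$ by Lemma~\ref{lemma:lcrn_addition}; applying negation symmetry once more yields $\alpha+\beta\in\R_{LCRN}$. If $\alpha$ and $\beta$ have opposite signs, then $\abs{\alpha+\beta}=\max\{\abs{\alpha},\abs{\beta}\}-\min\{\abs{\alpha},\abs{\beta}\}$, and this difference lies in $\R_{LCRN}$ by Lemma~\ref{lemma:lcrn_subtraction}, applied to the two nonnegative elements $\abs{\alpha},\abs{\beta}$ of $\R_{LCRN}$ taken in the order larger-minus-smaller; negation symmetry again removes the absolute value and gives $\alpha+\beta\in\R_{LCRN}$.

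I do not expect any real obstacle: the substantive work was already carried out in Lemmas~\ref{lemma:lcrn_addition} and~\ref{lemma:lcrn_subtraction}. The only point requiring care is the opposite-sign case --- Lemma~\ref{lemma:lcrn_subtraction} requires the subtrahend to be no greater than the minuend, so one must present $\max\{\abs{\alpha},\abs{\beta}\}$ and $\min\{\abs{\alpha},\abs{\beta}\}$ in the correct order before invoking it, and then explicitly use the negation symmetry of $\R_{LCRN}$ to pass from $\abs{\alpha+\beta}\in\R_{LCRN}$ to $\alpha+\beta\in\R_{LCRN}$.
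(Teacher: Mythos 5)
Your proposal is correct and follows essentially the same route as the paper's own proof: $0\in\R_{LCRN}$ from Lemma~\ref{lemma:q_in_lcrn_in_rtcrn}, closure under negation directly from the definition (which depends on $\alpha$ only through $\abs{\alpha}$), and closure under addition by the case split on whether $\abs{\alpha+\beta}$ equals $\abs{\alpha}+\abs{\beta}$ or $\max\{\abs{\alpha},\abs{\beta}\}-\min\{\abs{\alpha},\abs{\beta}\}$, invoking Lemmas~\ref{lemma:lcrn_addition} and~\ref{lemma:lcrn_subtraction} respectively. Your extra care about ordering the arguments to Lemma~\ref{lemma:lcrn_subtraction} and explicitly invoking negation symmetry at the end only makes explicit what the paper leaves implicit.
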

\begin{proof}
    Lemma~\ref{lemma:q_in_lcrn_in_rtcrn} tells us that $0\in\R_{LCRN}$, and the definition of $\R_{LCRN}$ implies that it is closed under additive inverses.
    To see that $\R_{LCRN}$ is closed under addition, let $\alpha,\beta\in\R_{LCRN}$.
    Then either
    \[
        \abs{\alpha + \beta} = \abs{\alpha} + \abs{\beta},
    \]
    in which case $\alpha + \beta \in \R_{LCRN}$ by Lemma~\ref{lemma:lcrn_addition}, or
    \[
        \abs{\alpha + \beta} = \max\{\abs{\alpha}, \abs{\beta}\} - \min\{\abs{\alpha}, \abs{\beta}\},
    \]
    in which case  $\alpha + \beta \in \R_{LCRN}$ by Lemma~\ref{lemma:lcrn_subtraction}.
\end{proof}

\begin{lemma}\label{lemma:lcrn_multiplication}
    For each $\alpha,\beta\in\td\cap\R_{LCRN}$, we have $\alpha\beta\in\R_{LCRN}$.
\end{lemma}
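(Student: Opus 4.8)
The plan is to mimic the constructions of Lemmas~\ref{lemma:lcrn_addition} and~\ref{lemma:lcrn_reciprocal}. Given $\alpha,\beta\in\td\cap\R_{LCRN}$, take CRNs $N_1=(S_1,R_1)$ and $N_2=(S_2,R_2)$ with disjoint species sets (without loss of generality) that testify $\alpha,\beta\in\R_{LCRN}$ via species $X_1\in S_1$, $X_2\in S_2$ and fixed points $\bz_1,\bz_2$ with $\bz_1(X_1)=\alpha$ and $\bz_2(X_2)=\beta$. Form $N=(S,R)$ by adjoining a single new ``product'' species $P$ together with the reactions
\begin{align*}
    X_1 + X_2 &\goesto{1} X_1 + X_2 + P,\\
    P         &\goesto{1} \emptyset,
\end{align*}
so that the ODE for $P$ is $\D{p} = x_1 x_2 - p$. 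Both reactions are catalytic in the old species, so they leave the dynamics of $S_1\cup S_2$ untouched, and all rate constants equal $1$, so integrality is preserved.

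First I would check convergence from the zero state. Solving the linear ODE for $p$ gives $p(t) = ce^{-t} + e^{-t}\int_1^t e^s x_1(s)x_2(s)\,ds$ for some constant $c$, and since $x_1(t)\to\alpha$ and $x_2(t)\to\beta$, L'H\^opital's rule (exactly as in Lemma~\ref{lemma:lcrn_addition}) yields $\lim_{t\to\infty}p(t)=\alpha\beta$. Together with the convergence of the $S_1$- and $S_2$-components to $\bz_1$ and $\bz_2$, this shows the trajectory from $\vec{0}$ converges to $\bz := (\bz_1,\bz_2,\alpha\beta)$. Boundedness is immediate too: if $x_1,x_2\le b$ on $\td$ for some constant $b$, then $\D{p}\le b^2 - p$, so $p(t)\le b^2$ for all $t$.

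Next I would verify exponential stability of $\bz$ via the Jacobian, again following the earlier lemmas. Ordering the species as $S_1$, $S_2$, $\{P\}$, the reaction $X_1+X_2\goesto{1}X_1+X_2+P$ contributes only $\partial\D{p}/\partial x_1 = x_2$ and $\partial\D{p}/\partial x_2 = x_1$ off the diagonal and $\partial\D{p}/\partial p = -1$ on it, while $X_1$, $X_2$, and every other old species are unaffected by $P$. Hence
\[
    J_N = \begin{pmatrix} J_{N_1} & \vec{0} & \vec{0} \\ \vec{0} & J_{N_2} & \vec{0} \\ J_{3,1} & J_{3,2} & -1 \end{pmatrix}
\]
is block lower triangular, so its eigenvalues are those of $J_{N_1}$, those of $J_{N_2}$, and $-1$. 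All have negative real parts --- the first two groups by the hypothesis $\alpha,\beta\in\R_{LCRN}$, the last trivially --- so by Lyapunov's exponential stability theorem $\bz$ is exponentially stable, and therefore $\alpha\beta\in\R_{LCRN}$.

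Since no division by $\alpha\beta$ occurs anywhere (in contrast with Lemmas~\ref{lemma:lcrn_reciprocal} and~\ref{lemma:lcrn_subtraction}), the argument needs no special case when $\alpha$ or $\beta$ equals $0$. I expect no serious obstacle; the only point demanding care is confirming that the new bimolecular reaction is genuinely catalytic in $X_1$ and $X_2$, so that $J_{1,3}=J_{2,3}=\vec{0}$ and the block-triangular structure --- hence the clean splitting of the spectrum --- really does hold.
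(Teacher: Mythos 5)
Your proposal is correct and follows essentially the same route as the paper's proof: the same two catalytic reactions producing the new species with ODE $\D{p}=x_1x_2-p$, the same L'H\^opital argument for convergence to $\alpha\beta$, and the same block lower triangular Jacobian whose spectrum is that of $J_{N_1}$, $J_{N_2}$, and $\{-1\}$. Your explicit check of boundedness of the new species is a small addition the paper leaves implicit, but otherwise the two arguments coincide.
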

\begin{proof}
    Let $\alpha,\beta\in\td\cap\R_{LCRN}$, and let $N_1= (S_1, R_1)$ and $N_2= (S_2, R_2)$ be CRNs that testify to this by computing $\alpha$ and $\beta$ with species $X$ and $Y$, respectively.
    We also let $\bz_1\in\td^{S_1}$ and $\bz_2\in\td^{S_2}$ be the equilibrium points that $N_1$ and $N_2$ use to compute $\alpha$ and $\beta$, \ie, $\bz_1(X)=\alpha$ and $\bz_2(Y)=\beta$.

    Let $N=(S,R)$ be the CRN defined by $S=S_1\cup S_2\cup S_3$ and $R=R_1\cup R_2\cup R_3$ where $S_3=\{U\}$ and $R_3$ consists of the three reactions 
    \begin{align*}
        X + Y &\goesto{1} X + Y + U\\
        U &\goesto{1} \emptyset.
    \end{align*}
    Note that the ODE for $U$ is
    \[
        \D{u}=xy-u,
    \] 
    and the solution for $u(t)$ with all species initialized to zero is
    \[
        u(t)=ce^{-t}+e^{-t}\int_{1}^{t}e^sx(s)y(s)ds,
    \] 
    for some constant $c$.
    Hence,
    \[
        \lim_{t \to \infty}u(t)
            = 0+\lim_{t \to \infty}\frac{\int_{1}^{t}e^sx(s)y(s)ds}{e^t}.
    \]
    By L'H\^{o}pital's rule, we have
    \[
        \lim_{t \to \infty}u(t)
            = \lim_{t \to \infty}x(t)y(t)=\alpha\beta.
    \]

    It remains to be shown that the equilibrium point $(\bz_1,\bz_2,\alpha\beta)$ is exponentially stable.
    First, we fix an order $S_1$, $S_2$, $S_3$ of the species in the Jacobian matrix $J_N$.
    We use $J_{i, j}$ to denote submatrix of $J_N$ that contains the partial derivatives of each species $A\in S_{i}$ with respect to each species $B\in S_{j}$ for $i,j\in\{1,2,3\}$.
    Then $J_N$ can be written as follows.
    \[
        J_{N} =
            \begin{pmatrix}
                J_{N_1} & \vec{0} & \vec{0} \\
                \vec{0} & J_{N_2} & \vec{0} \\
                J_{3, 1} & J_{3, 2} & J_{3, 3}
            \end{pmatrix}                  
    \]
    Since $N_1$ and $N_2$ have disjoint species, it is clear that $J_{1, 2}=\vec{0}$ and $J_{2, 1}=\vec{0}$.
    Furthermore, $N_1$ and $N_2$ are unaffected by the species $Y$, so $J_{1,3}=\vec{0}$ and $J_{2, 3}=\vec{0}$.
    We also note that $J_{3, 3}$ contains one element $\frac{\partial}{\partial u}(xy-u) = -1$. 

    Since $J_N$ is a lower triangular block matrix,
    \begin{equation}\label{eq:det}
        \lvert J - \lambda I \rvert = 
            \lvert J_{N_{1}} - \lambda I_1 \rvert  \lvert J_{N_{2}} - \lambda I_2 \rvert  \lvert -1 - \lambda \rvert.
    \end{equation}
    We can now conclude the eigenvalues of $J_{N}$ are
    \begin{enumerate}[1)]
        \item the eigenvalues of $J_{N_{1}}$,
        \item the eigenvalues of $J_{N_{2}}$, and
        \item the eigenvalue of $J_{3, 3}$, which is $-1$.
    \end{enumerate}
    Since $\alpha, \beta$ are both in $\R_{LCRN}$, the real part of the eigenvalues of $J_{N_1}$ and  $J_{N_2}$ are all negative.
    Thus all the eigenvalues of $J_N$ have negative real parts.
    Therefore, $\alpha\beta \in \R_{LCRN}$. 
\end{proof}  

\begin{corollary}\label{lemma:lcrn_mult_subgroup}
    $\R_{LCRN}\setminus\{0\}$ is a multiplicative subgroup of $\R\setminus\{0\}$.
\end{corollary}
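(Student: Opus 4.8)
The plan is to verify the multiplicative group axioms for $\R_{LCRN}\setminus\{0\}$ directly from the lemmas already proved in this section, after first recording one elementary observation: $\R_{LCRN}$ is closed under negation. This is immediate from the definition of $\R_{LCRN}$, since that definition constrains only $\abs{\alpha}$ (via the fixed point $\bz$ with $\bz(X)=\abs{\alpha}$), so $\alpha\in\R_{LCRN}$ iff $-\alpha\in\R_{LCRN}$. In particular $\abs{\alpha}\in\R_{LCRN}$ whenever $\alpha\in\R_{LCRN}$, because $\abs{\alpha}$ is one of $\alpha,-\alpha$. This observation is the only nonroutine point; everything else is bookkeeping, and there is no real obstacle.

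First I would handle the identity: Lemma~\ref{lemma:q_in_lcrn_in_rtcrn} gives $1\in\Q\subseteq\R_{LCRN}$, and $1\ne 0$, so $1\in\R_{LCRN}\setminus\{0\}$. Next, closure under multiplication: given $\alpha,\beta\in\R_{LCRN}\setminus\{0\}$, the observation above puts $\abs{\alpha},\abs{\beta}\in\td\cap\R_{LCRN}$, so Lemma~\ref{lemma:lcrn_multiplication} gives $\abs{\alpha}\abs{\beta}\in\R_{LCRN}$; since $\abs{\alpha\beta}=\abs{\alpha}\abs{\beta}$ and $\R_{LCRN}$ is closed under negation, $\alpha\beta\in\R_{LCRN}$, and $\alpha\beta\ne 0$ because $\alpha,\beta\ne 0$, whence $\alpha\beta\in\R_{LCRN}\setminus\{0\}$. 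Finally, closure under inverses: given $\alpha\in\R_{LCRN}\setminus\{0\}$, we have $\abs{\alpha}\in(0,\infty)\cap\R_{LCRN}$, so Lemma~\ref{lemma:lcrn_reciprocal} gives $\tfrac{1}{\abs{\alpha}}\in\R_{LCRN}$; since $\abs{\tfrac{1}{\alpha}}=\tfrac{1}{\abs{\alpha}}$ and $\R_{LCRN}$ is closed under negation, $\tfrac{1}{\alpha}\in\R_{LCRN}\setminus\{0\}$. Associativity and commutativity are inherited from $\R\setminus\{0\}$, completing the proof.

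I would close by remarking that Corollary~\ref{lemma:lcrn_add_subgroup} together with this corollary shows that $\R_{LCRN}$ is a subfield of $\R$, which is the advertised main theorem of the section and the fact that drives the proof (in the next sections) that every algebraic number lies in $\R_{LCRN}\subseteq\R_{RTCRN}$.
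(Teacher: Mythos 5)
Your proof is correct and takes essentially the same route as the paper, which disposes of the corollary in one sentence by citing Lemma~\ref{lemma:q_in_lcrn_in_rtcrn} for the identity and Lemmas~\ref{lemma:lcrn_reciprocal} and~\ref{lemma:lcrn_multiplication} for inverses and closure. The only difference is that you make explicit the sign bookkeeping (closure of $\R_{LCRN}$ under negation, so that the lemmas stated for nonnegative reals apply via $\abs{\alpha}$), which the paper leaves implicit but uses in exactly the same way.
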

\begin{proof}
    Lemma~\ref{lemma:q_in_lcrn_in_rtcrn} tells us that $1 \in \R_{LCRN}$, so this follows immediately, from Lemmas~\ref{lemma:lcrn_reciprocal} and~\ref{lemma:lcrn_multiplication}.  
\end{proof}

We now have the main result of this section.
\begin{theorem}
    $\R_{LCRN}$ is a subfield of $\R$.
\end{theorem}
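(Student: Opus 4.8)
The plan is to observe that essentially all the work has already been done in Corollaries~\ref{lemma:lcrn_add_subgroup} and~\ref{lemma:lcrn_mult_subgroup}, so the theorem reduces to checking that these two facts, together with $0,1\in\R_{LCRN}$, suffice to witness the subfield axioms. Recall that a subset $F\subseteq\R$ is a subfield if and only if (i) $0,1\in F$ with $0\neq 1$, (ii) $F$ is closed under addition and under additive inverses, and (iii) $F\setminus\{0\}$ is closed under multiplication and under multiplicative inverses; the remaining field axioms (associativity and commutativity of both operations, and distributivity) are inherited automatically from $\R$ since $F\subseteq\R$. So the proof is just an assembly of earlier results.

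First I would invoke Lemma~\ref{lemma:q_in_lcrn_in_rtcrn}, which gives $\Q\subseteq\R_{LCRN}$; in particular $0\in\R_{LCRN}$ and $1\in\R_{LCRN}$, and of course $0\neq 1$, so condition (i) holds. Next, Corollary~\ref{lemma:lcrn_add_subgroup} states precisely that $\R_{LCRN}$ is an additive subgroup of $\R$, which is exactly condition (ii). Finally, Corollary~\ref{lemma:lcrn_mult_subgroup} states that $\R_{LCRN}\setminus\{0\}$ is a multiplicative subgroup of $\R\setminus\{0\}$, which is exactly condition (iii). Combining these three observations, $\R_{LCRN}$ satisfies all the defining conditions of a subfield of $\R$, completing the proof.

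\textbf{Where the difficulty lies.} There is no real obstacle in the theorem itself — it is pure bookkeeping over the preceding development. The genuine content was already absorbed into the earlier lemmas: the field-like closure properties rest on the explicit CRN constructions for addition (Lemma~\ref{lemma:lcrn_addition}), reciprocal (Lemma~\ref{lemma:lcrn_reciprocal}), subtraction (Lemma~\ref{lemma:lcrn_subtraction}), and multiplication (Lemma~\ref{lemma:lcrn_multiplication}), each verified via a block-lower-triangular Jacobian argument showing that the composite fixed point inherits negative-real-part eigenvalues from its components and hence is exponentially stable. The only mild subtlety worth making explicit in the write-up is the reduction, used in Corollaries~\ref{lemma:lcrn_add_subgroup} and~\ref{lemma:lcrn_mult_subgroup}, from arbitrary (possibly negative) real numbers to the nonnegative case handled by the construction lemmas — but that reduction is already carried out there, so the present theorem simply cites it.
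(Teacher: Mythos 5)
Your proposal is correct and matches the paper's proof, which likewise derives the theorem immediately from Corollaries~\ref{lemma:lcrn_add_subgroup} and~\ref{lemma:lcrn_mult_subgroup}; your version merely spells out the subfield criterion and the membership of $0$ and $1$ in slightly more detail.
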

\begin{proof}
   This follows immediately from Corollaries~\ref{lemma:lcrn_add_subgroup} and~\ref{lemma:lcrn_mult_subgroup}.
\end{proof}
    
\section{Algebraic Numbers Are Lyapunov CRN-Computable}
In this section, we prove that every algebraic number is Lyapunov CRN-computable.
We begin by proving that algebraic numbers that are the smallest positive root of a polynomial with distinct roots are Lyapunov CRN-computable.
In this case, we construct a CRN with one species that when initialized to zero asymptotically approaches the smallest positive root of the polynomial.
We also ensure that the root is an exponentially stable equilibrium point of the CRN.

For an arbitrary algebraic number, we reduce the problem to the special case by shifting all the roots of its minimal polynomial by a rational number.
By doing so, we make the relevant root become the smallest positive root, and use the special case to complete the proof of the theorem.

\begin{lemma}\label{lemma:first_root_algebraic}
    Every algebraic number that is the smallest positive root of some integral polynomial with roots only of multiplicity one is in $\R_{LCRN}$.
\end{lemma}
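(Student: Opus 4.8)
The plan is to realize $\alpha$ directly as the exponentially stable equilibrium of a \emph{one}-species CRN whose single mass-action ODE is, up to an overall sign, the given polynomial itself. Write $p(x)=\sum_{i=0}^{n}a_i x^i$ for an integral polynomial, all of whose roots are simple, with $\alpha$ its smallest positive root; in particular $\alpha>0$. First I would normalize $p$: after dividing out the largest power of $x$ that divides it, we may assume $a_0=p(0)\neq 0$, which affects neither the positive roots of $p$ nor the multiplicity of $\alpha$. Put $\varepsilon=\operatorname{sgn}(a_0)\in\{-1,+1\}$ and $g=\varepsilon p$, so that $g$ has integer coefficients, $g(0)=|a_0|\ge 1$, and $g(\alpha)=0$.

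The crux is that the three sign facts we need — $g(0)>0$, $g>0$ on $(0,\alpha)$, and $g'(\alpha)<0$ — all follow from $\alpha$ being the \emph{smallest} positive root together with $\alpha$ being \emph{simple}. Indeed, $g$ has no zero on $[0,\alpha)$ and $g(0)>0$, so $g>0$ on $[0,\alpha)$ by the intermediate value theorem; and since $\alpha$ is a simple root, $g'(\alpha)=\varepsilon p'(\alpha)\neq 0$, while $g>0$ immediately to the left of $\alpha$ with $g(\alpha)=0$ forces $g'(\alpha)\le 0$, hence $g'(\alpha)<0$.

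I would then build the CRN $N=(\{X\},R)$ whose ODE is $x'=g(x)=\sum_j c_j x^j$, where $c_j=\varepsilon a_j\in\Z$: for each $j$ with $c_j>0$ include the reaction $jX\goesto{c_j}(j+1)X$ (which for $j=0$ reads $\emptyset\goesto{c_0}X$), and for each $j\ge 1$ with $c_j<0$ include $jX\goesto{|c_j|}(j-1)X$. Summing (net effect)$\times$(rate) over these reactions reproduces $g$ exactly, and because $c_0=|a_0|>0$ there is no negative constant term to implement, so $N$ is a well-formed CRN with positive-integer rate constants $|c_j|=|a_j|$; this gives integrality. Starting from $x(0)=0$, we have $x'=g(x)>0$ whenever $0\le x<\alpha$, so $x(\cdot)$ is strictly increasing, and by uniqueness of solutions it cannot reach the fixed point $\alpha$ in finite time; hence $x(t)\in[0,\alpha)$ for all $t\in\td$, giving boundedness (with $\beta=\alpha$) and global existence, and the monotone bounded trajectory converges to some $L\le\alpha$ with $g(L)=0$, forcing $L=\alpha$. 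Finally, the state $\bz$ assigning $\alpha$ to $X$ is a fixed point with $\bz(X)=|\alpha|$, and its $1\times 1$ Jacobian is $g'(\alpha)<0$, which has negative real part, so by the Lyapunov exponential stability theorem $\bz$ is exponentially stable; together with the convergence just shown, this witnesses $\alpha\in\R_{LCRN}$.

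I do not expect a deep obstacle. The points that need care are: (i) the normalization that makes the constant term nonzero — so the trajectory from $\mathbf 0$ actually moves — and of the correct sign — so the constant-term reaction is a production and $N$ stays a legitimate CRN with nonnegative concentrations; (ii) arguing genuine monotone convergence to $\alpha$ rather than mere boundedness, which is exactly where ``smallest positive root'' is used; and (iii) observing that ``simple root'' is precisely what promotes $g'(\alpha)\le 0$ to $g'(\alpha)<0$, i.e., what makes the equilibrium exponentially rather than merely asymptotically stable.
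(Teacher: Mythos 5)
Your proposal is correct and follows essentially the same route as the paper: a one-species CRN with one reaction per monomial so that $x'=\pm P(x)$, monotone convergence from $x(0)=0$ to the smallest positive root, and exponential stability from $P'(\alpha)<0$ via the Lyapunov criterion. Your extra normalization step (dividing out the power of $x$ so the constant term is genuinely nonzero) tidies a point the paper's write-up glosses over, but the construction and the key sign arguments are the same.
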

\begin{proof}
    Let $\alpha>0$ be an algebraic number, and let $P(x)=c_n x^n+c_{n-1}x^{n-1}+\cdots + c_0$ be the polynomial with integral coefficients that testifies to this.
    Furthermore, we assume that $\alpha$ is the first positive root of $P$ and that the roots of $P$ only have multiplicity one. 
    Without loss of generality, we also assume that $c_0 \ge 0$, and use $-P(x)$ otherwise.
    We now build a CRN $N=(S,R)$ with species $S=\{X\}$ such that
    \[
        \D{x} = P(x).
    \]
    For each term $c_k x^k$ in $P(x)$, we include a reaction in $R$ to add this term to $\D{x}$.
    If $c_k>0$, we add the reaction
    \[
        k X \goesto{c_k} (k+1) X,
    \]
    and if $c_k<0$ we add the reaction
    \[
        k X \goesto{-c_k} (k-1) X.
    \]
    Note that it is possible for the number of products or reactants to be empty.
    For example, the reaction for $c_0$ is $\emptyset\goesto{c_0} X$ because $c_0>0$.
    Similarly, if $c_1<0$, the reaction would be $X\goesto{-c_1}\emptyset$.
    Also, if $c_k=0$, we do not add any reaction.

    It now suffices to show that $\alpha$ is an exponentially stable equilibrium point of $N$ and that if $X$ is initialized to zero, then $x(t)$ converges to $\alpha$.
    Since $c_0>0$, it is clear that $P(0)>0$, whence $\D{x}>0$ at time $0$.
    Finally, since $\alpha$ is the first positive root of $P(x)$ and $x(0)=0$, it is clear that $\lim_{t\rightarrow\infty} x(t)=\alpha$.

    To show that $\alpha$ is exponentially stable, it suffices to show that all eigenvalues of the Jacobian matrix at $\alpha$ have negative real parts, \ie, that $P'(\alpha)<0$.
    We show this using the following two facts.
    \begin{enumerate}
        \item
        $P(x)>0$ for all $x\in[0,\alpha)$.
        Otherwise $\alpha$ would not be the smallest positive root.

        \item
        $P'(\alpha) \neq 0$.
        Otherwise $\alpha$ would have multiplicity of at least two, but we assumed the roots of $P$ have multiplicity one.
    \end{enumerate}
    Finally, it is clear that
    \[
        P'(\alpha)=\lim_{x\to \alpha^-}\frac{P(x)-P(\alpha)}{x-\alpha} < 0,
    \]
    since $P(x)>0$ for all $x\in[0,\alpha)$ and $P'(x)\ne 0$.
\end{proof}    

\begin{theorem}\label{thm:lcrn_algebraic}
    Every algebraic number is an element of $\R_{LCRN}$.
\end{theorem}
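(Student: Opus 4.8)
The plan is to deduce this from Lemma~\ref{lemma:first_root_algebraic} together with the fact, just established, that $\R_{LCRN}$ is a subfield of $\R$. The only real work is a change of variables that turns an arbitrary algebraic number into the smallest positive root of a squarefree integer polynomial, so that the special case applies.

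So let $\alpha\in\R$ be algebraic. First I would take $m(x)\in\Z[x]$ to be an integer multiple of the minimal polynomial of $\alpha$ over $\Q$; being (a scalar multiple of) an irreducible polynomial in characteristic zero, $m$ has only simple roots, and $m(\alpha)=0$. Let $\beta_1<\cdots<\beta_r$ enumerate the distinct real roots of $m$ and write $\alpha=\beta_i$. The idea is to pick a rational $q$ that translates $\beta_i$ to the smallest positive root of $m(x-q)$: take $q\in(-\beta_i,-\beta_{i-1})\cap\Q$ when $i>1$ (this interval is nonempty since $\beta_{i-1}<\beta_i$), and any rational $q>-\beta_1$ when $i=1$. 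A direct check shows that for such $q$ we have $\beta_j+q<0$ for $j<i$, $\beta_i+q>0$, and $\beta_j+q>\beta_i+q$ for $j>i$; in particular $0$ is not a root of $m(x-q)$ and $\alpha+q$ is its smallest positive root. Clearing denominators, set $P(x)=c\,m(x-q)$ with $c\in\Z^+$ chosen so that $P\in\Z[x]$; translating and scaling by a nonzero constant changes neither the roots nor their multiplicities, so $P$ is an integral polynomial with only simple roots whose smallest positive root is $\alpha+q>0$. By Lemma~\ref{lemma:first_root_algebraic}, $\alpha+q\in\R_{LCRN}$. Since $q\in\Q\subseteq\R_{LCRN}$ by Lemma~\ref{lemma:q_in_lcrn_in_rtcrn} and $\R_{LCRN}$ is closed under subtraction, this yields $\alpha=(\alpha+q)-q\in\R_{LCRN}$, completing the proof.

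I do not anticipate a genuine obstacle; the points that need a moment's care are (i) recording that the minimal polynomial is squarefree, so that the multiplicity-one hypothesis of Lemma~\ref{lemma:first_root_algebraic} is preserved under the translation and scaling, and (ii) verifying that the chosen shift really does place $\alpha+q$ strictly below every other positive real root and keeps $0$ from becoming a root of $P$. The latter is what guarantees $P(0)\ne 0$, which Lemma~\ref{lemma:first_root_algebraic} needs so that the trajectory of $x'=P(x)$ started at the origin actually leaves it; everything else is immediate from the field property of $\R_{LCRN}$.
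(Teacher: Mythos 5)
Your proposal is correct and follows essentially the same route as the paper: translate by a suitably chosen rational so that $\alpha$ becomes the smallest positive root of a (rescaled, integral, squarefree) shift of its minimal polynomial, invoke Lemma~\ref{lemma:first_root_algebraic}, and then undo the shift using $\Q\subseteq\R_{LCRN}$ and the closure of $\R_{LCRN}$ under addition/subtraction. Your version is slightly more careful than the paper's (explicitly checking that $0$ is not a root of the shifted polynomial and handling the position of $\alpha$ among all real roots rather than only the positive ones), but the argument is the same.
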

\begin{proof}
    Let $\alpha>0$ be an algebraic number, and let $P(x)$ be the minimal polynomial with integral coefficients that testifies to this.
    Since $P$ is minimal, its roots have multiplicity one \cite{oGowe08}.
    Therefore, if $\alpha$ is the smallest positive root of $P$, then $\alpha\in\R_{LCRN}$ by Lemma~\ref{lemma:first_root_algebraic}.

    If $\alpha$ is not the smallest positive root of $P$, let $\beta$ be the largest positive root less than $\alpha$.
    Now let $\frac{p}{q}$ be a rational satisfying $\beta<\frac{p}{q}<\alpha$, and let $\gamma = \alpha-\frac{p}{q}$.
    If $n$ is the degree of $P$, then $P\left(x+\frac{p}{q}\right)\cdot q^n$ is an integral polynomial with distinct roots and $\gamma$ is its smallest positive root.
    By Lemma~\ref{lemma:first_root_algebraic}, $\gamma\in\R_{LCRN}$, and since $\Q\subseteq\R_{LCRN}$ and $\R_{LCRN}$ is closed under addition, $\frac{p}{q}+\gamma=\alpha\in\R_{LCRN}$.
\end{proof}

    \section{Real Time CRN-Computable Transcendental Numbers}
\newcommand{\troot}{\frac{e - 1 + \sqrt{(e-1)^2 +4}}{2}}

We now show that we can compute more than the algebraic numbers in real time by proving that a transcendental number can be computed in our model.
Our proof utilizes the following construction where the species $U$ is the one that computes the transcendental.

\begin{construction}
	Let $N = (S, R)$ where $S = \{U, V, X\}$ and $R$ consists of the following reactions.
	\begin{align*}
			\emptyset &\goesto{1} X \hspace*{0.5in}	&X &\goesto{1} \emptyset\\
			U         &\goesto{1} 2U &\emptyset &\goesto{1} U \\
			X + U     &\goesto{1} X &V &\goesto{1} 2V \\
			X         &\goesto{1} X + V 	&X + V &\goesto{1} X \\
			U + V     &\goesto{1} \emptyset  \\
	\end{align*}
\end{construction}

For the rest of this section, we assume that $N$ is initialized to the $\vec{0}$ state.
Since the ODE for the species $X$ is $\D{x}=1-x$, it is clear that for all time $t\in\td$, $x(t)=1-e^{-t}$.
The remainder of our proof depends on the following lemma which reduces the dynamical system into a single variable.

\begin{lemma}
	The ODE of the species $U$ can be written
	\begin{equation}
		\frac{du}{dt} = -u^2 + u f(t) + 1
	\end{equation}
	where $f(t) = e^{-t}+e^{1-e^{-t}} - 1$.
\end{lemma}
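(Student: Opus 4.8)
The plan is to extract the three mass-action ODEs of the CRN, substitute the already-established solution $x(t) = 1 - e^{-t}$ for the species $X$, and then eliminate $V$ in closed form so that the $U$-equation collapses to the claimed scalar (Riccati-type) ODE. First I would read off the system from the nine reactions: collecting net effects gives
\begin{align*}
    \D{x} &= 1 - x, & \D{u} &= u + 1 - xu - uv, & \D{v} &= v + x - xv - uv.
\end{align*}
Since $x(0) = 0$, the $X$-equation integrates to $x(t) = 1 - e^{-t}$ on $\td$, as already noted in the text.

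The key step will be to decouple $V$ from $U$ by tracking the difference $w = v - u$. Subtracting the $U$-equation from the $V$-equation, the quadratic terms $-uv$ cancel and the remainder simplifies:
\[
    \D{w} = (v - u) + (x - 1) - x(v - u) = (1 - x)(w - 1) = e^{-t}(w - 1).
\]
This is a first-order linear ODE; with $w(0) = v(0) - u(0) = 0$ and $\int_0^t e^{-s}\,ds = 1 - e^{-t}$, solving for $w - 1$ yields $w(t) = 1 - e^{\,1 - e^{-t}}$, so that $v = u + 1 - e^{\,1 - e^{-t}}$.

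Finally I would substitute this expression for $v$ back into the $U$-equation:
\[
    \D{u} = 1 + u\bigl(1 - x - v\bigr) = 1 + u\bigl(e^{\,1 - e^{-t}} - x - u\bigr) = -u^2 + u\bigl(e^{\,1 - e^{-t}} - x\bigr) + 1,
\]
and since $x = 1 - e^{-t}$ the coefficient of $u$ is $e^{-t} + e^{\,1 - e^{-t}} - 1 = f(t)$, which is exactly the claimed identity. I do not anticipate a real obstacle here: the only idea required is noticing that the combination $v - u$ kills the quadratic coupling and leaves a solvable linear equation, after which everything is routine mass-action bookkeeping and one elementary integration. The mild care point is checking initial conditions ($u(0) = v(0) = 0$) to pin down the constant in $w(t)$.
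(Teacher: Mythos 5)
Your proposal is correct and follows essentially the same route as the paper: derive the mass-action ODEs, observe that the difference of $U$ and $V$ (you use $w = v-u$, the paper uses $y = u-v$) satisfies a decoupled linear ODE $(1-x)(w-1)$ solvable in closed form, and substitute back to eliminate $v$. All computations check out, including the initial-condition bookkeeping.
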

\begin{proof}
	Recall that $N$ is initialized to $\vec{0}$ and $x(t)=1-e^{-t}$.
	We also introduce the variable $y(t)=u(t)-v(t)$ for all $t\in\td$.
	Then by the construction of $N$,
	\[
		\D{y} = \D{u}-\D{v} = (u+1-xu-uv)-(v+x-xv-uv).
	\]
	Since $y = u-v$, the above ODE simplifies to
	\begin{equation}
		\D{y} = (y+1)(1-x).
	\end{equation}
	By separation of variables and integration, we conclude that $y(t)=e^{1-e^{-t}}-1$.
	Then $v=u-e^{1-e^{-t}}+1$, so it follows that
	\begin{align*}
		\frac{du}{dt}
			&= u + 1 - xu - uv \\
			&= u + 1 - (1-e^{-t})u - u(u-e^{1-e^{-t}}+1) \\
			&= -u^2 + u f(t) + 1.
	\end{align*}
\end{proof}

\begin{corollary}
	The ODE of the species $U$ can be written
	\[
		\frac{du}{dt} = -(u-r_1)(u-r_2)
	\]
	where $r_1(t) = \frac{f(t) + \sqrt{f(t)^2+4}}{2}$ and $r_2(t) = \frac{f(t) - \sqrt{f(t)^2+4}}{2}$.
\end{corollary}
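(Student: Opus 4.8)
The plan is to observe that this corollary is an immediate algebraic consequence of the preceding lemma: it is just the factorization of a quadratic polynomial in the variable $u$, with time-dependent coefficients. From the lemma we have $\frac{du}{dt} = -u^2 + u f(t) + 1 = -\bigl(u^2 - u f(t) - 1\bigr)$, so it suffices to factor the monic quadratic $q(u) = u^2 - u f(t) - 1$.

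First I would apply the quadratic formula to $q(u)$, treating $t$ (hence $f(t)$) as a parameter: the two roots are $\frac{f(t) \pm \sqrt{f(t)^2 + 4}}{2}$, which are exactly $r_1(t)$ and $r_2(t)$ as defined in the statement. I would note that the discriminant $f(t)^2 + 4$ is strictly positive for every $t \in \td$, so $r_1(t)$ and $r_2(t)$ are real and distinct for all $t$ (this also matches Vieta: $r_1 + r_2 = f(t)$ and $r_1 r_2 = -1$, the latter being useful later). Then $q(u) = (u - r_1(t))(u - r_2(t))$ identically in $u$, and substituting back gives $\frac{du}{dt} = -(u - r_1(t))(u - r_2(t))$, which is the claim.

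There is no real obstacle here; the only thing to be slightly careful about is that the factorization is an identity in $u$ for each fixed $t$, so that it is legitimate to substitute the solution $u = u(t)$ into both sides — but this is immediate since both sides are polynomials in $u$ with the same coefficients. This corollary is really a bookkeeping step that packages the lemma into the Riccati-type form $\frac{du}{dt} = -(u - r_1)(u - r_2)$ with $r_1 r_2 \equiv -1$, which is the form the subsequent convergence analysis of $u(t)$ toward the transcendental limit will use.
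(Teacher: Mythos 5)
Your proof is correct and matches the paper's (implicit) reasoning: the paper states this corollary without proof, treating it as the immediate factorization of the quadratic $-u^2 + uf(t) + 1$ via the quadratic formula, exactly as you do. Nothing further is needed.
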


To simplify our analysis, we begin with the following lemma which upper bounds the functions $r_1(t)$ and $u(t)-r_2(t)$.

\begin{lemma}
	For all $t\in\td$,
	\begin{align}
	 	u(t) - r_2(t)&\ge\sqrt{2} - 1\label{eq:u_less_than_constant}\\
	 	&\text{\hspace*{-2.0in}and} \notag\\
	 	r_1(t)&\ge\widehat{r}(t)\label{eq:r1_ge_rhat},
	 \end{align} 
	 where $\widehat{r}(t) = \frac{e-1+\sqrt{(e-1)^2+4}}{2}\left(1-e^{-t}\right)$
\end{lemma}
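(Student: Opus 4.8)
The plan is to prove the two inequalities separately; each reduces to elementary calculus once the right estimates are isolated. Write $\rho=\troot$, so the lemma's auxiliary function is $\widehat{r}(t)=\rho(1-e^{-t})$. I would first record three preliminary facts. (a) Putting $s=e^{-t}\in(0,1]$ we have $f(t)=s+e^{1-s}-1$, and since $\frac{d}{ds}(s+e^{1-s})=1-e^{1-s}<0$ for $s\in(0,1)$, the map $s\mapsto s+e^{1-s}$ is strictly decreasing on $(0,1]$ with limit $e$ as $s\to0^+$; hence $1\le f(t)<e-1$ for all $t\in\td$. (b) Since $\sqrt{(e-1)^2+4}>e-1$, we have $\rho>e-1$. (c) Writing $\phi(a)=\tfrac12\big(\sqrt{a^2+4}-a\big)$, a direct computation gives $\rho\,\phi(e-1)=\tfrac14\big((e-1)^2+4-(e-1)^2\big)=1$, so $\phi(e-1)=1/\rho$ (equivalently, $r_1(t)r_2(t)=-1$ for all $t$).

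For \eqref{eq:u_less_than_constant}, I would first show $u(t)\ge0$ for all $t\in\td$. This is the usual forward invariance of the nonnegative orthant: from $\D{u}=-u^2+uf(t)+1$ we get $\D{u}=1>0$ at any instant where $u=0$, so starting from $u(0)=0$ the solution can never become negative (if $u(t_1)<0$ for some $t_1>0$, set $t_0=\sup\{t\le t_1: u(t)\ge 0\}$; then $u(t_0)=0$ while $\D{u}(t_0)=1>0$, contradicting $u<0$ on $(t_0,t_1]$). Therefore $u(t)-r_2(t)\ge -r_2(t)=\phi(f(t))$. Since $\phi$ is decreasing ($\phi'(a)=\tfrac12\big(\tfrac{a}{\sqrt{a^2+4}}-1\big)<0$) and $f(t)<e-1$ by (a), we get $u(t)-r_2(t)\ge\phi(e-1)=1/\rho$ by (c). It then remains only to check $1/\rho\ge\sqrt2-1$, that is, $\rho\le\sqrt2+1$; and since $\rho$ is the larger root of the upward parabola $q(z)=z^2-(e-1)z-1$ (its roots multiply to $-1$), this is equivalent to $q(\sqrt2+1)\ge0$, which holds because
\[
q(\sqrt2+1)=(3+2\sqrt2)-(e-1)(\sqrt2+1)-1=(\sqrt2+1)(3-e)>0 .
\]

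For \eqref{eq:r1_ge_rhat}, I would write $r_1(t)=g(f(t))$ with $g(a)=\tfrac12\big(a+\sqrt{a^2+4}\big)$, so $\rho=g(e-1)$. Since $g'(a)=\tfrac12\big(1+\tfrac{a}{\sqrt{a^2+4}}\big)\in(0,1)$, the function $g$ is increasing and $1$-Lipschitz, so using $0\le f(t)<e-1$ from (a), $\rho-r_1(t)=g(e-1)-g(f(t))\le(e-1)-f(t)$. Hence it suffices to prove $(e-1)-f(t)\le\rho e^{-t}$, which, in the variable $s=e^{-t}\in(0,1]$, becomes $H(s):=(\rho+1)s-e+e^{1-s}\ge0$. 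This follows from $H(0)=0$, $H'(s)=(\rho+1)-e^{1-s}$ with $H'(0)=\rho+1-e>0$ by (b), and $H''(s)=e^{1-s}>0$: thus $H'$ is increasing, stays positive, and $H$ is increasing, so $H\ge H(0)=0$. Rearranging gives $r_1(t)\ge\rho-\rho e^{-t}=\widehat{r}(t)$.

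I do not expect a serious obstacle in either half. The two points that need genuine care are (i) justifying $u(t)\ge0$ rather than assuming it, and (ii) noticing that the opaque constant $\sqrt2-1$ equals $1/(\sqrt2+1)$ and that $\phi(e-1)=1/\rho$, which is exactly what makes the first estimate collapse to the clean inequality $e<3$. For \eqref{eq:r1_ge_rhat} the only real idea — rather than an obstacle — is to pass through the Lipschitz bound $\rho-r_1(t)\le(e-1)-f(t)$ instead of manipulating the nested square roots directly, after which only the scalar inequality $H\ge0$ is left.
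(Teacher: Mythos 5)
Your argument is correct, and for \eqref{eq:r1_ge_rhat} it takes a genuinely different route from the paper's. The paper attacks the nested radicals directly: it first shows $f(t)\ge(e-1)(1-e^{-t})$ (which reduces to $e^{-s}\ge 1-s$) to handle the linear term, and then uses the homogeneity estimate
\[
\sqrt{f(t)^2+4}\;\ge\;\sqrt{(e-1)^2(1-e^{-t})^2+4(1-e^{-t})^2}\;=\;(1-e^{-t})\sqrt{(e-1)^2+4},
\]
which exploits $4\ge 4(1-e^{-t})^2$. Your proof replaces both steps by the observation that $g(a)=\tfrac12\bigl(a+\sqrt{a^2+4}\bigr)$ is increasing and $1$-Lipschitz, which collapses everything to the single scalar inequality $H(s)=(\rho+1)s-e+e^{1-s}\ge 0$, dispatched by $H(0)=0$, $H'(0)=\rho+1-e>0$, $H''>0$; this is arguably cleaner and isolates exactly where the constant $\rho$ enters. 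For \eqref{eq:u_less_than_constant} the two arguments are essentially the same in structure ($u\ge 0$ together with an upper bound on $f$ forcing a lower bound on $-r_2$), but the paper gets there in one line from the cruder bound $f(t)<2$, via $-r_2(t)=\frac{2}{f(t)+\sqrt{f(t)^2+4}}>\frac{2}{2+2\sqrt2}=\sqrt2-1$, whereas you route through $\phi(e-1)=1/\rho$ and $\rho\le\sqrt2+1$; both are valid, yours being slightly longer but yielding the sharper constant $1/\rho$. Two incidental merits of your write-up: you actually justify $u(t)\ge 0$ by a forward-invariance argument rather than asserting it, and you avoid the sign slip in the paper's intermediate claim (its ``$r_2(t)>1-\sqrt2$'' should read $r_2(t)<1-\sqrt2$, which is what its final chain in fact uses).
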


\begin{proof}
    Recall that $f(t)=e^{-t}+e^{1-e^{-t}} - 1$ is monotonically increasing wth upper bound $e-1$, and therefore $f(t)<2$ for all $t\in\td$.
    It follows that $r_2(t) > 1-\sqrt{2}$ for all $t\in\td$.
    Since $u(t)\ge 0$ for all $t\in\td$, we obtain inequality \eqref{eq:u_less_than_constant} from $u(t)-r_2(t)\ge 0-r_2(t) > \sqrt{2}-1$ for all $t\in\td$.

	It remains to be shown that \eqref{eq:r1_ge_rhat} holds, \ie, that
	\[
		\frac{f(t) + \sqrt{f(t)^2+4}}{2}
			\ge\frac{e-1+\sqrt{(e-1)^2+4}}{2}\left(1-e^{-t}\right),
	\]
	holds for all $t\in\td$.
	Since for all $t\in\td$
	\[
		f(t) = e^{-t} + e^{1-e^{-t}}-1 \ge (e-1)(1-e^{-t}),
	\]
	it suffices to show that $\sqrt{f(t)^2+4}\ge (e-1)^2+4\left(1-e^{-t}\right)$.
	This can be seen by
	\begin{align*}
		\sqrt{f(t)^2+4}
			&\geq \sqrt{(e-1)^2(1-e^{-t})^2+4}\\
			&\geq \sqrt{(e-1)^2(1-e^{-t})^2+4(1-e^{-t})^2} \\
			&= (1-e^{-t})\sqrt{(e-1)^2 + 4}.
	\end{align*}
\end{proof}

Noting that $u - r_2$ is always positive yields an immediate corollary of the above lemma that gives tight bounds on the behavior of $u(t)$.

\begin{corollary}\label{corrroot}
	For all $t\in\td$,
	\begin{equation}
		\widehat{u}(t) \le u(t) \le r_1(t)
	\end{equation}
	where $\widehat{u}(t)$ is the solution to the initial value problem of $\widehat{u}(0)=0$ and
	\[
		\D{\widehat{u}} = -(\widehat{u} - \widehat{r})(\sqrt{2}-1).
	\]
\end{corollary}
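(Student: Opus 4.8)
The plan is to prove the two inequalities $u(t)\le r_1(t)$ and $u(t)\ge\widehat u(t)$ one at a time, working entirely from the factored ODE $\D{u}=(r_1(t)-u)(u-r_2(t))$ of the preceding corollary together with the two facts already established: $u(t)-r_2(t)\ge\sqrt2-1>0$ for all $t\in\td$ (inequality~\eqref{eq:u_less_than_constant}) and $r_1(t)\ge\widehat r(t)$ for all $t\in\td$ (inequality~\eqref{eq:r1_ge_rhat}). The initial condition is $u(0)=0=\widehat u(0)$.

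For the upper bound I would run a barrier argument on $w(t)=r_1(t)-u(t)$. At $t=0$ we have $w(0)=r_1(0)>0$, since $u(0)=0$ and $f(0)=1$ force $r_1(0)=\frac{1+\sqrt5}{2}$. Differentiating and substituting the ODE gives $\D{w}=\D{r_1}-(u(t)-r_2(t))\,w(t)$. Here $\D{r_1}\ge 0$ because $f$ is monotonically increasing (as already noted) and $r_1$ is an increasing function of $f$ (its derivative with respect to $f$ equals $\frac12\left(1+f/\sqrt{f^2+4}\right)>0$, as $|f|<\sqrt{f^2+4}$). Hence $\D{w}\ge-(u(t)-r_2(t))\,w(t)$, and multiplying through by the integrating factor $\exp\!\left(\int_0^t(u(s)-r_2(s))\,ds\right)$ shows that this product is nondecreasing in $t$; since it is positive at $t=0$, it remains positive, so $w(t)>0$ and thus $u(t)\le r_1(t)$ for all $t\in\td$.

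For the lower bound I would combine the upper bound just proved with the two quoted inequalities to show that $u$ is a supersolution of the initial value problem defining $\widehat u$. Since $u(t)\le r_1(t)$, the factor $r_1(t)-u(t)$ is nonnegative; using this with $u(t)-r_2(t)\ge\sqrt2-1>0$ and then $r_1(t)\ge\widehat r(t)$ yields
\[
    \D{u}=(r_1(t)-u(t))(u(t)-r_2(t))\ge(r_1(t)-u(t))(\sqrt2-1)\ge(\widehat r(t)-u(t))(\sqrt2-1)=-(u(t)-\widehat r(t))(\sqrt2-1).
\]
Because $\widehat u$ satisfies $\D{\widehat u}=-(\widehat u-\widehat r)(\sqrt2-1)$ with $\widehat u(0)=0=u(0)$, and the right-hand side $-(y-\widehat r(t))(\sqrt2-1)$ is affine—hence Lipschitz—in $y$, the standard comparison theorem for scalar ODEs gives $u(t)\ge\widehat u(t)$ for all $t\in\td$, completing the proof. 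The only real work is in these two differential-inequality steps: verifying $\D{r_1}\ge 0$ cleanly for the barrier argument, and correctly invoking the scalar comparison theorem for the lower bound; everything else is bookkeeping with inequalities already in hand.
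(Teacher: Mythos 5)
Your proof is correct and follows the same route the paper intends: the paper states this corollary as ``immediate'' from the positivity of $u-r_2$, and your barrier argument for $u\le r_1$ plus the scalar comparison theorem for $u\ge\widehat{u}$ is precisely the standard elaboration of that claim, with the details (the sign of $\D{r_1}$, the integrating factor, the Lipschitz hypothesis) filled in correctly.
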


It is routine to check that the solution to $\widehat{u}(t)$ with $\widehat{u}(0)=0$ is 
\begin{equation}
	\widehat{u}(t) =
		\troot \left(1 - \frac{e^{-(\sqrt{2}-1)t}-(\sqrt{2}-1)e^{-t}}{1-(\sqrt{2}-1)}\right) \label{eq:root}
\end{equation}

\begin{lemma}\label{lemma_b}  $u(t)$ converges exponentially to
	\[
		\troot
	\]
	as $t\to\infty$.
\end{lemma}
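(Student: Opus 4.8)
The plan is to use the squeeze established in Corollary~\ref{corrroot}, namely $\widehat{u}(t) \le u(t) \le r_1(t)$ for all $t \in \td$, together with explicit limits of the two bounding functions. First I would compute $\lim_{t\to\infty} r_1(t)$. Since $f(t) = e^{-t} + e^{1-e^{-t}} - 1$ is continuous and $f(t) \to 0 + e^{1} - 1 = e-1$ as $t \to \infty$, we get
\[
	\lim_{t\to\infty} r_1(t) = \frac{(e-1) + \sqrt{(e-1)^2+4}}{2} = \troot.
\]
Next I would compute $\lim_{t\to\infty}\widehat{u}(t)$ directly from the closed form \eqref{eq:root}: as $t \to \infty$, both $e^{-(\sqrt 2-1)t}$ and $e^{-t}$ tend to $0$, so the parenthesized factor tends to $1$, and therefore $\widehat{u}(t) \to \troot$ as well. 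Applying the squeeze theorem to $\widehat{u}(t) \le u(t) \le r_1(t)$ then gives $\lim_{t\to\infty} u(t) = \troot$.

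This establishes plain convergence, but the lemma claims \emph{exponential} convergence, so I would next extract rates from the same two bounds. For the lower bound, \eqref{eq:root} shows
\[
	\troot - \widehat{u}(t) = \troot \cdot \frac{e^{-(\sqrt 2 - 1)t} - (\sqrt 2 - 1)e^{-t}}{1 - (\sqrt 2 - 1)},
\]
which is $O\!\left(e^{-(\sqrt 2 - 1)t}\right)$ since $\sqrt 2 - 1 < 1$ is the dominant exponent. For the upper bound I would bound $r_1(t) - \troot$ by controlling $f(t) - (e-1)$ and propagating through the (Lipschitz, on the relevant range) map $s \mapsto \tfrac{s + \sqrt{s^2+4}}{2}$; since $f(t) - (e-1) = e^{-t} + \big(e^{1-e^{-t}} - e\big)$ and $e^{1-e^{-t}} - e = e\big(e^{-e^{-t}} - 1\big) = O(e^{-t})$, we get $r_1(t) - \troot = O(e^{-t})$, hence $r_1(t) - \troot = O\!\left(e^{-(\sqrt 2 - 1)t}\right)$ as well. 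Combining, $\big|u(t) - \troot\big| \le \max\{\troot - \widehat{u}(t),\, r_1(t) - \troot\} = O\!\left(e^{-(\sqrt 2 - 1)t}\right)$, i.e.\ $u(t)$ converges exponentially, which is what the lemma asserts (and which, via Lemma~\ref{lemma:real-time-dilation}, will be strong enough for real-time computability in the theorem that presumably follows).

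I expect the main obstacle to be the upper-bound rate: the lower bound comes for free from the explicit formula \eqref{eq:root}, but bounding $r_1(t) - \troot$ requires an elementary but slightly fussy estimate showing $f(t)$ approaches $e-1$ exponentially and that the quadratic-formula expression is Lipschitz in $f$ near $f = e-1$ (its derivative $\tfrac12\big(1 + s/\sqrt{s^2+4}\big)$ is bounded by $1$ there, so this is routine). One should also double-check the trivial sign facts — that $u - r_2 > 0$ throughout, already noted before Corollary~\ref{corrroot}, and that $\troot$ is indeed the common limit value of both $r_1(\infty)$ and $\widehat u(\infty)$, which follows because $\widehat r(t) \to (e-1+\sqrt{(e-1)^2+4})/2$ and the ODE $\D{\widehat u} = -(\widehat u - \widehat r)(\sqrt2-1)$ drives $\widehat u$ to its moving target's limit.
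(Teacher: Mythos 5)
Your proposal is correct and follows essentially the same route as the paper: the paper's proof also squeezes $u(t)$ between $\widehat{u}(t)$ and $r_1(t)$ via Corollary~\ref{corrroot} and observes that both bounds converge exponentially to the claimed value, though it leaves the rate estimates you spell out (the $O(e^{-(\sqrt{2}-1)t})$ decay from \eqref{eq:root} and the Lipschitz propagation of $f(t)\to e-1$ through the quadratic formula) as ``clear.''
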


\begin{proof}
	It is clear from \eqref{eq:root} and the definition of $r_1$ that $\widehat{u}(t)$ and $r_1(t)$ converge exponentially to $\troot$, so this follows from Corollary~\ref{corrroot}.
\end{proof}

\begin{theorem}\label{thm:rtcrn_has_trans}
	$\R_{RTCRN}$ contains a transcendental number.
\end{theorem}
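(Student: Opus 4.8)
The plan is to show that the number $\rho = \troot$, which the construction's species $U$ computes, is transcendental, and then assemble the pieces already established to conclude $\rho \in \R_{RTCRN}$. The second task is essentially bookkeeping: Lemma~\ref{lemma_b} shows $u(t) \to \rho$ exponentially, and the construction is manifestly integral; the only remaining hypothesis of real-time CRN-computability is boundedness, so I would first verify that all species concentrations stay bounded. Since $x(t) = 1 - e^{-t} \in [0,1]$, and Corollary~\ref{corrroot} pins $u(t)$ between $\widehat{u}(t)$ and $r_1(t)$ — both bounded — $u$ is bounded; then $v = u - (e^{1-e^{-t}} - 1)$ is bounded as well because the subtracted term lies in $[0, e-1]$. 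With integrality, boundedness, and exponential convergence in hand, Lemma~\ref{lemma:real-time-dilation} (the dilation lemma, applied with the designated species $U$) upgrades the exponential rate $e^{-\gamma t}$ to the required $2^{-t}$ bound for $t \ge 1$, giving $\rho \in \R_{RTCRN}$.

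The mathematical heart of the theorem is proving $\rho$ is transcendental. Observe that $\rho$ is the positive root of $z^2 - (e-1)z - 1 = 0$, i.e. $\rho^2 - (e-1)\rho - 1 = 0$, equivalently $e = \rho + 1 - 1/\rho = \frac{\rho^2 + \rho - 1}{\rho}$. So $e$ is a rational function of $\rho$ with rational (indeed integer) coefficients. If $\rho$ were algebraic, then $e = (\rho^2 + \rho - 1)/\rho$ would lie in the field $\Q(\rho)$, which is an algebraic extension of $\Q$, hence $e$ would be algebraic — contradicting the Hermite–Lindemann theorem that $e$ is transcendental. Therefore $\rho$ is transcendental. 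I would cite Hermite's theorem (or Lindemann–Weierstrass) for the transcendence of $e$.

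The step I expect to require the most care is not the transcendence argument itself — which is a short field-theoretic observation once the defining quadratic is identified — but rather making sure the chain of earlier lemmas genuinely certifies $\rho \in \R_{RTCRN}$ with $U$ as the output species, in particular confirming boundedness of $V$ (and of $X$), since the definition of $\R_{RTCRN}$ demands a uniform bound $\beta$ on \emph{every} species, not just the designated one. Once boundedness is nailed down, the argument is: $N$ is integral by inspection of the Construction; $u(t) \to \rho$ exponentially by Lemma~\ref{lemma_b}, so some $\Phi_{\tau,\gamma}(\rho)$ holds; hence $\Phi_{1,\ln 2}(\rho)$ holds by Lemma~\ref{lemma:real-time-dilation}, i.e. $\rho \in \R_{RTCRN}$; and $\rho$ is transcendental by the above. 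This completes the proof.
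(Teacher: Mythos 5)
Your proposal is correct and takes essentially the same approach as the paper: real-time computability of the limit follows from the construction together with Corollary~\ref{corrroot}, Lemma~\ref{lemma_b}, and Lemma~\ref{lemma:real-time-dilation}, and transcendence is obtained by showing that algebraicity of $\frac{e-1+\sqrt{(e-1)^2+4}}{2}$ would force $e$ to be algebraic. Your phrasing via the quadratic $\rho^2-(e-1)\rho-1=0$ and the field $\Q(\rho)$ is just a tidier form of the paper's squaring argument, and your explicit boundedness check for $V$ supplies a detail the paper leaves implicit.
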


\begin{proof}
	By Lemma~\ref{lemma_b}, it suffices to show that $\troot$ is transcendental.
	For this, let $x,y\in\R$ and assume that $y = x - 1+\sqrt{5-2x+x^2}$.
	It suffices to show that if $y$ is algebraic, then $x$ is algebraic.	
	Assume the hypothesis.
	Then $5-2x+x^2 = (y-x+1)^2$, so $x$ is the root of a polynomial, all of whose coefficients are in $\Z\cup\{y\}$, so $x$ is algebraic.
\end{proof}

    \section{Discussion}
We have shown that
\begin{align}
\textsc{Alg} \subseteq \R_{LCRN} \subseteq \R_{RTCRN}, \label{eq:inclusions}
\end{align}
where \textsc{Alg} is the field of algebraic numbers.  We have also shown that
\[
\textsc{Alg} \neq \R_{RTCRN},
\]
 so at least one of the two inclusions in \eqref{eq:inclusions} is proper.  At the time of this writing we do not know whether the left-hand inclusion is proper, and we do not know whether the right-hand inclusion is proper. 

Both notions of real time CRN computability discussed here, $\R_{RTCRN}$ and $\R_{LCRN}$, are closely related to the investigations by Bournez, Fraigniaud, and Koegler \cite{cBoFrKo12,oKoeg12} of computability by large population protocols.  Roughly speaking, a {\it large population protocol} ({\it LPP}) is a deterministic chemical reaction network in which every reaction has exactly two reactants and exactly two products.  Among other things, this implies that the sum of concentrations of all species is constant over time.  A real number $\alpha$ is defined to be {\it computable by an LPP} if there exist an LPP $N$, a state ${\bf z}$ of $N$, and a designated subset $D$ of the species of N with the following three properties.
\begin{enumerate}
\item[(1)] $N$ has only finitely many fixed points.
\item[(2)] $\bz$ is an exponentially stable state of $N$.  
\item[(3)] $\alpha$ is the sum of the concentrations of the species in $D$ in the state $\bz$.
\end{enumerate}

Bournez, Fraigniaud, and Koegler prove that a real number $\alpha$ is computable by an LPP if and only if $\alpha$ is an algebraic number.  The ``only if'' direction of their proof is an elimination of quantifiers argument \cite{bMark02} that depends crucially on (1) above.  It is to be hoped that further research will clarify the relationship between LPP computability and real time CRN computability.

What does \eqref{eq:inclusions} say about the complexity of algebraic irrationals on other models of computation?

The first thing to understand here is that deterministic chemical reaction networks are, in a very precise sense, a model of analog computation. In 1941, Shannon \cite{jShan41} introduced the \emph{general-purpose analog computer} (\emph{GPAC}). A GPAC is a mathematical abstraction of the \emph{differential analyzer}, an early analog computer that Bush \cite{jBush31} had constructed at MIT, and which Shannon had operated as a graduate research assistant. The GPAC model has been corrected and otherwise modified a number of times over the years \cite{jPouRic74,jLipRub87,jGraCos03,jGrac04}.
Its present form can be characterized in terms of circuits, but it is more simply characterized as a system 
\begin{equation}
    \by'(t)=\bp(t,y),
\end{equation}
of ordinary differential equations, where $\bp$ is a vector of polynomials.
A deterministic CRN is thus a special type of GPAC of the form
\begin{equation}
    \by'(t)=\bp(y),
\end{equation}
where each component $p_i$ of $\bp$ has the ``kinetic'' form $p_i(\by) = q_i(\by) - y_ir_i(\by)$, with $q_i$ and $r_i$ having nonnegative coefficients \cite{jHarTot81}. Our CRNs in this paper have the added constraints that all the coefficients in these polynomials are integers, and all concentrations are initialized to zero.  Our main theorem thus implies that all algebraic numbers are real time computable by GPACs that have only finite information coded into their parameters and initializations.

We now turn from analog computation to discrete computation. A famous conjecture of Hartmanis and Stearns \cite{jHarSte65} says that no irrational algebraic number is real time computable by a Turing machine.
This conjecture has been open for over 50 years.  Fischer, Meyer, and Rosenberg \cite{jFiMeRo70} proved that real-time computability on a Turing machine is equivalent to linear-time computability on a Turing machine.
Hence the Hartmanis-Stearns conjecture is equivalent to the statement that no irrational algebraic number is linear-time computable by a Turing machine. As observed by Gurevich and Shelah \cite{cGurShe89}, linear time is a very model-dependent notion. Hence, as stated, the Hartmanis-Stearns conjecture is a very specific conjecture about linear-time computation on Turing machines. 

Our main theorem does not disprove the Hartmanis-Stearns conjecture (nor was it intended to), but conceptually locating the gap between our main theorem and a disproof of the Hartmanis-Stearns conjecture would shed light on the computational complexities of algebraic irrationals.
This raises the following questions.

Question 1. Can CRNs in our model (or GPACs with only finite information encoded into their parameters and initializations) produce in linear time the individual digits of each real number that is real time CRN-computable? If so, our main theorem implies that the Hartmanis-Stearns conjecture fails for analog computation. If not, the Hartmanis-Stearns conjecture holds for analog computation and is essentially about producing the individual digits as opposed to the analog convergence that we have used here.

Question 2. Is there a reasonable discrete model of computation on which some algebraic irrational can be computed in linear time? If so, then the Hartmanis-Stearns conjecture is either false or model-dependent.
If not, then the Hartmanis-Stearns conjecture is true in a strong, model-independent way, at least for discrete computation. (Note that ``reasonable'' here excludes models that perform numerical operations faster than we know how to do them, because Brent \cite{jBren76} has shown how to compute $\sqrt{2}$ in linear time if integer multiplication can be done in linear time. See also \cite{oLip12}.)

    \section*{Acknowledgments}
We thank Olivier Bournez for telling us about the work [1,18] on large population protocols and for a discussion concerning the breakthrough [7] on Turing universality of deterministic CRNs.  We thank Wolfgang Kliemann, James Hanson, Jonathan Smith, and Matthew Riley for useful discussions, we thank Kendal Gast for useful suggestions on the exposition, and we thank Sergey Verlan for additional pointers to the literature.

    \bibliographystyle{plain}

\end{document}